\newcommand{\p}[1]{\partial_{#1}}
\newcommand{\Z}{\mathbb{Z}}
\newcommand{\M}{\bar{M}}
\newcommand{\W}{\bar{W}}
\newcommand{\inv}{^{-1}}
\newtheorem{theorem}{Theorem}
\newtheorem{lemma}[theorem]{Lemma}
\newtheorem{proposition}[theorem]{Proposition}
\theoremstyle{definition}
\newtheorem{definition}[theorem]{Definition}
\newtheorem{example}[theorem]{Example}
\theoremstyle{remark}
\newtheorem{remark}[theorem]{Remark}
\numberwithin{equation}{section}
\numberwithin{theorem}{section}
\newcommand\leref{Lemma \ref}
\newcommand\prref{Proposition \ref}
\newcommand\exref{Example \ref}
\newcommand\reref{Remark \ref}
\newcommand\seref{Section \ref}
\DeclareMathOperator\Der{Der}
\DeclareMathOperator\Ker{Ker}
\DeclareMathOperator\Span{span}
\def\A{\mathcal{A}}
\def\DCP{\mathcal{D}_{\mathbb{CP}^1}}
\def\F{\mathcal{F}}
\def\MM{\mathcal{M}}
\def\E{\mathcal{E}}
\def\T{\mathcal{T}}
\def\V{\mathcal{V}}
\def\VC{\mathcal{V}^\circ}
\def\ps{\partial_s}
\def\dd{\partial^*}
\def\dh{\hat{\partial}}
\def\dc{\partial^\circ}
\def\fin{\mathrm{fin}}
\def\x{\boldsymbol{x}}
\def\t{\boldsymbol{t}}
\def\bt{\bar{\t}}
\def\CC{\mathbb{C}}
\def\al{\alpha}
\def\Ga{\Gamma}
\def\de{\delta}
\def\La{\Lambda}
\def\d{\partial}
\begin{document}

\title[Additional symmetries of the extended bigraded Toda]{Additional symmetries of the extended bigraded Toda hierarchy}
\author{Bojko Bakalov}
\author{William Wheeless}
\address{Department of Mathematics,
North Carolina State University,
Raleigh, NC 27695, USA}
\email{bojko\_bakalov@ncsu.edu; wbwheele@ncsu.edu}


\date{July 27, 2015; Revised December 3, 2015}

\keywords{Extended bigraded Toda hierarchy; Lax operator; tau-function; Virasoro algebra; wave function; wave operator}

\subjclass[2010]{Primary 37K35; Secondary 37K10, 53D45}

\begin{abstract}
The extended bigraded Toda hierarchy (EBTH) is an integrable system satisfied by the total descendant potential of $\mathbb{CP}^1$ with two orbifold points. We construct additional symmetries of the EBTH and describe explicitly their action on the Lax operator, wave operators, and tau-function of the hierarchy. In particular, we obtain infinitesimal symmetries of the EBTH that act on the tau-function as a subalgebra of the Virasoro algebra, generalizing those of Dubrovin and Zhang.
\end{abstract}

\maketitle


\section{Introduction}\label{s1}

The \emph{extended Toda hierarchy} was introduced by E. Getzler \cite{Ge01} and Y. Zhang \cite{Z02} 
in bihamiltonian form and by Carlet--Dubrovin--Zhang \cite{CDZ04} in Lax form, as an integrable
hierarchy that governs the Gromov--Witten theory of $\mathbb{CP}^1$. 
Recall that the \emph{total descendant potential} $\mathcal{D}_X$ of a projective manifold (or more generally, orbifold) $X$
is a certain generating series of the Gromov--Witten invariants of $X$ (see \cite{Giv01}).
A version of the so-called Toda conjecture \cite{EY94,Ge01,Z02}
states that $\DCP$ is a tau-function of the extended Toda hierarchy
(and so satisfies Milanov's Hirota quadratic equations \cite{Mil07}).
There are now several known proofs of this conjecture; see \cite{Ge01,DZ04,Mil06,OP06}.
On the other hand, by the Virasoro conjecture \cite{EHX97} proved by A. Givental \cite{Giv01},
we have $L_n\DCP=0$ for $n\ge-1$, where the operators $L_n$ give a representation of (a subalgebra of)
the Virasoro algebra. It was shown by Dubrovin--Zhang \cite{DZ04} that the operators $L_n$ are infinitesimal symmetries
of the extended Toda hierarchy when acting on its tau-function. 

The \emph{extended bigraded Toda hierarchy} (EBTH) was introduced by G. Carlet \cite{Car06} as a generalization
of the extended Toda hierarchy related to the Frobenius manifolds from \cite{DZ98}. 
There is an EBTH for every pair $(k,m)$ of positive integers, with $k=m=1$ corresponding to the
extended Toda hierarchy. The total descendant potential of $\mathbb{CP}^1$
with two orbifold points of orders $k$ and $m$ is a tau-function of the EBTH (see \cite{MT08,CvdL13}).
Note that a part of the EBTH is the bigraded Toda hierarchy, which is a reduction of the 2D Toda hierarchy
(see \cite{Tak10,UT84}). 

Let us recall the additional symmetries of the KP hierarchy, following \cite{OS86,Di03,vM91}.
One introduces a pseudo-differential operator $M$ such that
$[L,M]=1$ for the Lax operator $L$; then for any $p\in\Z$, $n\in\Z_{\ge0}$, 
the operators $L^p M^n$ induce vector fields that commute with the flows of the KP hierarchy.
The action of these symmetries on the tau-function is given by the
\emph{Adler--Shiota--van Moerbeke formula} (see \cite{ASvM95,Di95,Di03,vM91}).
One obtains a representation of the $W_{1+\infty}$-algebra, which contains as subalgebras
the Heisenberg algebra (corresponding to $L^p$) and the Virasoro algebra (corresponding to $L^pM$).
Similar results hold also for the 2D Toda hierarchy \cite{ASvM95}. However, for their reductions such as
the KdV hierarchy or 1D Toda hierarchy, one has only the Virasoro algebra and not the whole
$W_{1+\infty}$-algebra (see \cite{AvM95,Di03,vM91}).

Now we summarize the contents of the present paper.
In \seref{s2}, we review the extended Toda hierarchy and 
the extended bigraded Toda hierarchy (EBTH), following the approach of K. Takasaki \cite{Tak10}.
Our version of the EBTH is related to the original definition of G. Carlet \cite{Car06} (or to \cite{CvdL13})
by an explicit change of variables, and we believe it is more convenient.
We discuss the Lax operator $L$, the wave operators, and wave functions of the EBTH.

Then, in \seref{s3}, we find two operators $M$ and $\M$ such that $[L,M]=[L,\M]=1$, 
and use them to construct additional symmetries of the EBTH. 
We describe explicitly the action of these symmetries on the Lax and wave operators.
We show that the symmetries corresponding to $L^p(M-\M)$ commute as the Virasoro 
vector fields $-z^p \p{z}$ ($p\ge0$). 
Our operators $M$ and $\M$ extend those of \cite{LHS12}, which were used there to 
construct additional symmetries of the bigraded Toda hierarchy. 
Unfortunately, the papers \cite{LHS12,LH13} have flaws; in particular, 
the higher powers of $M-\M$ used there are not well defined (see \reref{rmmbar} below).

In \seref{s4}, we determine how the additional symmetries of the EBTH 
act on the tau-function, thus getting an analog of the Adler--Shiota--van Moerbeke formula.
This result is new even in the case $k=m=1$ corresponding to the extended Toda hierarchy.
In this case, our Virasoro operators coincide with those of \cite{DZ04} (see \exref{evir2} below).
Dubrovin and Zhang \cite{DZ04}  have shown that these operators are infinitesimal symmetries
of the extended Toda hierarchy when acting on the tau-function, but their action
on the Lax and wave operators was previously unknown. 
As a consequence of our formula, the Virasoro constraints for the total descendant potential of $\mathbb{CP}^1$ can be expressed as equations on the Lax and wave operators. Similarly, we expect that the symmetries of the EBTH give Virasoro constraints for the total descendant potential of $\mathbb{CP}^1$ with two orbifold points
(cf.\ \cite{DZ99,Giv01,MT08,CvdL13}). 

Finally, \seref{s5} contains concluding remarks and open questions.

\section{The extended bigraded Toda hierarchy}\label{s2}
In this section, we review the extended Toda hierarchy from \cite{CDZ04} and its generalization,
the extended bigraded Toda hierarchy \cite{Car06}, following the approach of \cite{Tak10}. 
We discuss the Lax operator, wave operators, and wave functions of the hierarchy.

\subsection{Spaces of difference and differential-difference operators}\label{sdiffo}

We will consider functions of a variable $s$, and the shift operator $\La$ defined by $(\Lambda f)(s)	=	f(s+1)$.
The space $\A$ of (formal) \emph{difference operators} consists of all expressions of the form
\begin{equation*}
A=\sum_{i\in\Z}a_i(s)\Lambda^i
\end{equation*}
We have $\A=\A_+\oplus\A_-$ where $\A_+$ (respectively, $\A_-$) consists of $A\in\A$ such that $a_i=0$ for all $i<0$
(respectively, $i\ge0$). The projections of $A\in\A$ are defined by
\begin{equation*}
A_+	=\sum_{i\geq 0}a_i(s)\Lambda^i, \qquad
A_-	=	\sum_{i<0}a_i(s)\Lambda^i.
\end{equation*}

We let $\A_{++}$ be the space of $A\in\A$ such that $a_i=0$ for $i\ll0$ (i.e., the powers of $\La$ are bounded from below)
and $\A_{--}$ be the space of $A\in\A$ such that $a_i=0$ for $i\gg0$ (i.e., the powers of $\La$ are bounded from above).
Both $\A_{++}$ and $\A_{--}$ are associative algebras, where the product is defined by linearity and
\begin{equation*}
(a(s)\La^i )(b(s)\La^j) = a(s)b(s+i) \La^{i+j}.
\end{equation*}
We can multiply any $A\in\A$ by any element of $\A_\fin=\A_{++}\cap\A_{--}$.
However, in general, the product of an element of $\A_{++}$ with an element of $\A_{--}$ is not well defined.

We will also consider the space $\A[\ps]$ of (formal) \emph{differential-difference operators}, where $\La\ps=\ps\La$.
Note that such operators depend polynomially on $\ps$.
Again, there is a splitting $\A[\ps]=\A_+[\ps]\oplus\A_-[\ps]$, and we have the associative algebras
$\A_{++}[\ps]\supset\A_+[\ps]$ and $\A_{--}[\ps]\supset\A_-[\ps]$.
Differential-difference operators can be applied to $z^s$ and multiply it by a polynomial in $\log z$ whose coefficients are
formal power series in $z^{\pm1}$, according to the usual rule
\begin{equation*}
(a(s)\La^i \ps^j) z^s = a(s) z^i (\log z)^j  z^{s}.
\end{equation*}

\subsection{The extended Toda hierarchy}\label{seth}

The \emph{Lax operator} for the extended Toda hierarchy has the form
\begin{equation*}
			\label{LaxIntToda}
			L	=	\Lambda + v(s) + e^{u(s)}\Lambda\inv \in\A_\fin.
\end{equation*}
There are two \emph{wave operators} (also called ``dressing operators")
\begin{equation}\label{wavop}
\begin{split}
W	&=	1 + \sum_{i=1}^\infty w_i(s)\Lambda^{- i} \in 1+\A_- \subset \A_{--}, \\
 \W	&=	\sum_{i=0}^\infty \bar{w}_i(s)\Lambda^i \in\A_+, \qquad \bar{w}_0(s)	\neq	0 ,
\end{split}
\end{equation}
such that
\begin{equation*}
			L	=	W\Lambda W\inv	=	\W\Lambda\inv\W\inv.
\end{equation*}
The ``logarithm of $L$" is defined as follows \cite{CDZ04}:
\begin{equation}\label{logL}
			\log L	=	\frac{1}{2}W\p{s}W\inv - \frac{1}{2}\W\p{s}\W\inv	=	-\frac{1}{2}\frac{\partial W}{\partial s}W\inv + \frac{1}{2}\frac{\partial \W}{\partial s}\W\inv.
\end{equation}
Note that $\log L$ is a difference operator that commutes with $L$.

\begin{definition}\label{ExtTodaDef}
			The \emph{extended Toda hierarchy} (abbreviated ETH) in Lax form is given by:
\begin{equation}\label{ExtToda}
			\begin{aligned}	
				\p{t_n} L 	&=	[(L^n)_+,L],  &  \qquad & n\geq 1, \\
				\p{x_n} L	&=	[(2L^n\log L)_+,L],  &  \qquad & n\geq 0.
			\end{aligned}
\end{equation}
\end{definition}

\begin{remark}\label{reth1}
Our formulation of the ETH, which follows \cite{Tak10}, is equivalent to the original one from \cite{CDZ04} via the following change of variables:
		\begin{equation*}
			\begin{split}
				t^{1,k}	&=	\epsilon k! \, x_k,  \quad x=\epsilon s, \qquad\quad	k\geq 0,\\
				t^{2,k}	&=	\epsilon (k+1)! \, ( t_{k+1} + 2c_{k+1}x_{k+1} ),
			\end{split}	
		\end{equation*}
where 
\begin{equation}
			\label{harmnum}
			c_0	=	0,	\quad c_k	=	1 + \frac{1}{2} + \frac{1}{3} + \cdots + \frac{1}{k}
\end{equation}
		are the {harmonic numbers}. 
\end{remark}

The flows of the ETH act on the wave operators by:
		\begin{align*}
				\p{t_n}  W	
&=	-(L^n)_-W, \quad &
				\p{t_n}  \W	
&=	(L^n)_+\W, \\
				\p{x_n} W	&=	
-(2L^n\log L)_-W,\quad &
				\p{x_n} \W	&=	
(2L^n\log L)_+\W.
		\end{align*}
As in \cite{Tak10}, it is convenient to rewrite \eqref{ExtToda} as follows:
\begin{equation*}
			\begin{aligned}
				\p{t_n} L 	&=	[A_n,L],  &  \qquad & n\geq 1, \\
				\p{x_n} L	&=	[L^n\p{s} + P_n,L],  &  \qquad & n\geq 0,
			\end{aligned}
\end{equation*}
where
\begin{equation}\label{An}
A_n	=	\frac{1}{2}(L^n)_+ - \frac{1}{2}(L^n)_- ,\\
\end{equation}
and
\begin{equation}\label{Pn}
\begin{split}
				P_n	&=	-\Bigl(L^n\frac{\partial W}{\partial s}W\inv\Bigr)_+ - \Bigl(L^n\frac{\partial \W}{\partial s}\W\inv\Bigr)_-\\
						&=	L^nW\p{s}W\inv - (2L^n\log L)_- - L^n \p{s}\\
						&=	L^n\W\p{s}\W\inv + (2L^n\log L)_+ - L^n \p{s}.
\end{split}
\end{equation}

\begin{remark}\label{reth2}
We have $A_n,P_n\in\A_\fin$ and $P_0=0$. Hence, $\d_{x_0}-\d_s$ acts trivially on $W$, $\W$ and $L$,
and so they depend on $x_0+s$.
\end{remark}

Due to the above remark, from now on we will assume $x_0=s$.
Without writing it explicitly, we will consider $W$, $\W$ and $L$ as functions of $s$ and
\begin{equation*}
\t=(t_1,t_2,\dots), \qquad \x=(x_1,x_2,\dots).
\end{equation*}
Introduce the notation
\begin{equation}\label{xitz}
\xi(\t,z)=\sum_{i=1}^\infty t_iz^i
\end{equation}
		and
		\begin{equation*}
			\begin{split}
				\chi	=	z^{s + \xi(\x,z)}e^{\xi(\t,z)/2}, \qquad
				\bar{\chi}	 =	z^{s + \xi(\x,z\inv)}e^{-\xi(\t,z\inv)/2}.
			\end{split}
		\end{equation*}
Then the \emph{wave functions} of the ETH are defined by:
		\begin{equation*}
			\begin{split}
				\psi=\psi(s,\t,\x,z)				&=	W\chi					=	w\chi,\\
				\bar\psi=\bar{\psi}(s,\t,\x,z)	&=	\W \bar{\chi}	=	\bar{w}\bar{\chi},
			\end{split}
		\end{equation*}
		where
		\begin{equation}\label{wavfn}
\begin{split}
			w				
=	1	+	\sum_{i=1}^\infty w_i(s)z^{- i},\qquad
			\bar{w}	
=	\sum_{i=0}^\infty \bar{w}_i(s)z^i
\end{split}
		\end{equation}
are the (left) symbols of $W$ and $\W$, respectively.
By construction, we have:
		\begin{equation*}
			L\psi	=	z\psi,	\qquad	L\bar{\psi}	=	z\inv\bar{\psi},
		\end{equation*}
and
\begin{align*}
				\p{t_n}  \psi	&=	A_n\psi,	\quad &
				\p{t_n}  \bar\psi	&=	A_n\bar{\psi}, \\
				\p{x_n} \psi	&=	(L^n\p{s} + P_n)\psi,\quad &
				\p{x_n} \bar\psi	&=	(L^n\p{s} + P_n)\bar\psi.
\end{align*}

There exists a \emph{tau-function} $\tau(s,\t,\x)$ such that
		\begin{align*}
				w	&=	\frac\psi\chi = \frac{\tau(s,\t-[z\inv],\x)}{\tau(s,\t,\x)}		=	\frac{G(z)\tau(s,\t,\x)}{\tau(s,\t,\x)} \,,\\
				\bar w &= \frac{\bar\psi}{\bar\chi} 	=	\frac{\tau(s+1,\t+[z],\x)}{\tau(s,\t,\x)}	=	\frac{\bar{G}(z)\tau(s,\t,\x)}{\tau(s,\t,\x)} \,,
		\end{align*}
where
		\begin{equation}
			\label{zshift}
			[z]	=	\Bigl(z,\frac{z^2}{2},\frac{z^3}{3},\ldots\Bigr).
		\end{equation}
More explicitly,
		\begin{align*}
			G(z)	=	\exp\Bigl( -\sum_{i=1}^\infty \frac{z^{- i}}{i} \p{t_i} \Bigr), \qquad
	\bar{G}(z)	=	\exp\Bigl( \p{s} + \sum_{i=1}^\infty \frac{z^i}{i}\p{t_i} \Bigr).
\end{align*}
Observe that $\tau$ is unique only up to a multiplication by a function $f(s,\x)$ that satisfies
 $f(s+1,\x)=f(s,\x)$. The space of all such functions $f$ will be denoted by $\F$.

\subsection{The extended bigraded Toda hierarchy}\label{sebth}
{}From now on we will fix two positive integers $k$ and $m$.
Consider a \emph{Lax operator} of the form
		\begin{equation}
			\label{BTHLax}
			L	=	\Lambda^k + u_{k-1}(s)\Lambda^{k-1} + \cdots + u_{- m}(s)\Lambda^{- m} \in\A_\fin, \qquad u_{- m}\neq 0.
		\end{equation}
There exist \emph{wave operators} $W$ and $\W$ as in \eqref{wavop}, such that
		\begin{equation}
			\label{BTHLaxdress}
			L	=	W\Lambda^kW\inv	=	\W\Lambda^{- m}\W\inv.
		\end{equation}
This allows us to define fractional powers of $L$ for any integer $n$:
		\begin{equation}
			\label{EBTHfracL}
			L^{\frac{n}{k}}	=	W\Lambda^n W\inv	\in\A_{--},	\qquad	
L^{\frac{n}{m}}	=	\W\Lambda^{-n}\W\inv	\in\A_{++},
		\end{equation}
which commute with $L$ and satisfy
\begin{equation*}
\bigl(L^{\frac{n}{k}}\bigr)^k	=	\bigl( L^{\frac{n}{m}} \bigr)^m	=	L^n, \qquad n\in\Z_{\ge0}.
\end{equation*}
However, observe that $L^{\frac{n}{k}} \neq L^{\frac{p}{m}}$, unless $\frac{n}{k}=\frac{p}{m}\in\Z_{\ge0}$.
In particular, $L$ has two different inverses $L^{\frac{-k}{k}} \neq L^{\frac{-m}{m}}$, so we will not use
the notation $L\inv$.
As before, we define $\log L\in\A$ by \eqref{logL}. Then $\log L$ commutes with all $L^n$ for $n\in\Z_{\ge0}$,
but the composition of $\log L$ with a fractional power of $L$ is not well defined in general.

\begin{definition}\label{EBTHDef}
			The \emph{extended bigraded Toda hierarchy} (abbreviated EBTH) in Lax form is given by:
\begin{equation}\label{EBTHLaxflows}
			\begin{aligned}
\p{t_n} L &= [(L^{\frac{n}{k}})_+,L]	
, &	\qquad & n\geq 1, \\
\p{\bar t_n} L &= [(L^{\frac{n}{m}})_+,L]	
, &	\qquad & n\geq 1, \\
\p{x_n}	L &=	[(2L^n\log L)_+,L]	
,	&\qquad & n\geq 0.
			\end{aligned}
\end{equation}
		\end{definition}

The first two equations in \eqref{EBTHLaxflows} describe the \emph{bigraded Toda hierarchy},
which is a reduction of the 2D Toda hierarchy (see \cite{Tak10,UT84}). 
For $k=m=1$, the EBTH is equivalent to the extended Toda hierarchy.

The flows of the EBTH induce flows on the dressing operators:
\begin{equation}\label{EBTHdresflows}
		\begin{aligned}
\p{t_n} W	&=	-(L^{\frac{n}{k}})_-W,	& \qquad
\p{t_n} \W &=	(L^{\frac{n}{k}})_+\W,\\
\p{\bar{t}_n} W	&=	-(L^{\frac{n}{m}})_-W,	& \qquad \p{\bar{t}_n} \W	&=	(L^{\frac{n}{m}})_+\W, \\
			\p{x_n} W	&=	-(2L^n\log L)_-W,	& \qquad \p{x_n} \W	&=	(2L^n\log L)_+\W.
		\end{aligned}
\end{equation}

\begin{remark}\label{rebth2}
Since $\d_{x_0}-\d_s$ and $\p{t_{nk}}-\p{\bar{t}_{nm}}$ act trivially on $L$, $W$ and $\W$, it follows that 
$L$, $W$ and $\W$ depend on $x_0+s$ and $t_{nk}+\bar{t}_{nm}$ for $n\geq 1$. Without loss of generality, we will assume 
$x_0=s$ and $t_{nk}=\bar{t}_{nm}$, which amounts to making the substitution
$x_0+s \mapsto x_0$, $x_0-s \mapsto 0$, and
\begin{equation}\label{ttbar}
t_{nk}+\bar{t}_{nm} \mapsto t_{nk}, \quad t_{nk}-\bar{t}_{nm} \mapsto 0 \qquad (n\geq 1)
\end{equation}
in $L$, $W$ and $\W$.
\end{remark}

\begin{remark}\label{rebth1}
The version of the EBTH presented here is not the one originally introduced by Carlet in \cite{Car06} nor the one used in \cite{CvdL13}, but resembles more the ETH from \cite{Tak10} and \seref{seth}.
To compare it to the one from \cite{CvdL13}, first we need to change $\epsilon \mapsto-\epsilon$, which leads to 
$\Lambda \mapsto\Lambda\inv$ and $\zeta\mapsto\zeta\inv$ ($z$ here) in \cite{CvdL13}. Then we
have the following change of variables:
		\begin{align*}
				x								&=	\epsilon s,\\
				q^{k-\alpha}_n	&=	\epsilon k\Bigl( n + \frac{\alpha}{k} \Bigr)_{n+1}t_{nk+\alpha}, \quad \alpha=1,2,\ldots,k-1,\\
				q^{k+\beta}_n		&=	\epsilon m\Bigl( n + \frac{\beta}{m} \Bigr)_{n+1}\bar{t}_{nm+\beta}, \quad \beta=1,2,\ldots,m-1,\\
				q^{k+m}_n				&=	\epsilon m(n+1)!\Bigl( 
t_{(n+1)k}+\bar t_{(n+1)m} + c_{n+1}\Bigl(\frac{1}{k} + \frac{1}{m}\Bigr)x_{k+1} \Bigr),\\
				q^k_n						&=	\epsilon n! \, x_n,
		\end{align*}
where $n\geq 0$, $c_n$ are the harmonic numbers \eqref{harmnum}, and 
$(p)_n$ denotes the Pochhammer symbol,
		\begin{align*}
			(p)_0								&=	1,\\
			(p)_n						&=	\prod_{i=1}^n (p-i+1), \qquad n\geq 1, \\
			(p)_{-n}	&=	\prod_{i=-n+1}^0 (p-i+1)\inv	=	\frac{1}{(p+n)_n} \,.
		\end{align*}
The inverse of this change of variables can be used to translate our results to the version of the EBTH from \cite{CvdL13}.
\end{remark}

By \reref{rebth2}, from now on we will assume 
$x_0=s$ and $t_{nk}=\bar{t}_{nm}$ for $n\geq 1$. 
Introduce the notation
\begin{equation*}
\t=(t_1,t_2,\dots), \qquad \bt=(\bar t_1,\bar t_2,\dots), \qquad \x=(x_1,x_2,\dots),
\end{equation*}
and
\begin{equation*}
\xi_k(\t,z)=\sum_{i=1}^\infty t_{ki}z^{ki}.
\end{equation*}
We define the wave functions of the EBTH by:
		\begin{align*}
				\psi	=\psi(s,\t,\bt,\x,z)				& =	W\chi	=	w\chi,\\
				\bar{\psi}	=\bar\psi(s,\t,\bt,\x,z)	 & =	\W\bar{\chi}	=	\bar{w}\bar{\chi},
		\end{align*}
where
		\begin{align*}
			\chi				&=	z^{s + \xi(\x,z^k)}e^{\xi(\t,z) - \frac{1}{2}\xi_k(\t,z)},\\
			\bar{\chi}	&=	z^{s + \xi(\x,z^{-m})}e^{-\xi(\bt,z\inv) + \frac{1}{2}\xi_m(\bt,z\inv)},
		\end{align*}
and $\xi$, $w$, $\bar w$ are from \eqref{xitz}, \eqref{wavfn}.
Then
\begin{equation*}
L\psi=z^k\psi,  \qquad   L\bar\psi=z^{-m}\bar\psi,
\end{equation*}
and
\begin{equation}\label{psiflow}
		\begin{aligned}
\p{t_n}\psi &=	(L^{\frac{n}{k}})_+\psi, & \qquad n&\in\Z_{\geq 1} \setminus k\Z,\\
\p{\bar t_n}\psi &= -(L^{\frac{n}{m}})_-\psi, & \qquad n&\in\Z_{\geq 1} \setminus m\Z,\\
\p{t_{nk}}	\psi &= A_n\psi, & \qquad n&\in\Z_{\geq 1}, \\
\p{x_n}\psi &=	(L^n\p{s} + P_n)\psi, & \qquad n&\in\Z_{\geq 0},
		\end{aligned}
\end{equation}
where $A_n$ and $P_n$ are given by \eqref{An}, \eqref{Pn}. 
Exactly the same equations hold with $\psi$ replaced by $\bar\psi$.
Observe that, due to \eqref{EBTHfracL}, we have
$(L^{\frac{n}{k}})_+ \in\A_\fin$ and $(L^{\frac{n}{m}})_- \in\A_\fin$.

\section{Action of additional symmetries on the Lax and wave operators}\label{s3}

In this section, we find two operators $M$ and $\M$ such that $[L,M]=[L,\M]=1$, and use them to construct additional symmetries of the extended bigraded Toda hierarchy. We show that the symmetries corresponding to $L^p(M-\M)$ commute as the Virasoro 
vector fields $-z^p \p{z}$ ($p\ge0$).
We will continue using the notation from \seref{sebth}; in particular, we will 
assume $x_0=s$ and $t_{nk}=\bar{t}_{nm}$ for $n\geq 1$.

\subsection{Vector fields and their commutators}\label{svfc}

For two fixed positive integers $k$ and $m$,
consider the (infinite-dimensional) manifold $\MM$ consisting of all triples $(W,\W,L)$ satisfying \eqref{wavop}, \eqref{BTHLaxdress}.
%
Note that $\MM$ is a submanifold of the affine space $(1+\A_-)\times\A_+\times\A_\fin$; hence the tangent spaces of $\MM$ can be embedded
as subspaces of the vector space $\A_-\oplus\A_+\oplus\A_\fin$. Thus, vector fields on $\MM$ can be viewed as functions
from $\MM$ to $\A_-\oplus\A_+\oplus\A_\fin$.

\begin{definition}\label{VDef}
Let $\V$ be the space of functions $a\colon\MM\to\A$ with the property that $[a(X),L]=0$ for $X=(W,\W,L)\in\MM$.
For $a\in\V$, we define 
\begin{equation}\label{dda}
\dd_a(X)=\bigl(-a(X)_-W,a(X)_+\W,[a(X)_+,L]\bigr). 
\end{equation}
For brevity, when $X$ is fixed, we will write just $a$ instead of $a(X)$.
\end{definition}

The right-hand side of \eqref{dda} is an element of the tangent space of $\MM$ at the point $X$, because
\begin{align*}
\dd_a(W\Lambda^kW\inv) &= (\dd_a W)\Lambda^kW\inv - W\Lambda^kW\inv (\dd_a W) W\inv \\
&= [-a_-,W\Lambda^kW\inv] =[-a_-,L] \\
&=[a_+,L] = [a_+,\W\Lambda^{- m}\W\inv] \\
&=\dd_a(\W\Lambda^{- m}\W\inv).
\end{align*}
Therefore, every $\dd_a$ ($a\in\V$) is a vector field on $\MM$. 
For example, by \eqref{EBTHLaxflows} and \eqref{EBTHdresflows}, the flows of the EBTH correspond to the vector fields
\begin{equation}\label{vflows}
\p{t_n} =	\dd_{L^{\frac{n}{k}}},  \qquad
\p{\bar{t}_n} =	\dd_{L^{\frac{n}{m}}}, \qquad
\p{x_n} =	\dd_{2L^n\log L} \quad\text{on \;$\MM$.}
\end{equation}
The following key lemma calculates the brackets of the vector fields $\dd_a$
(cf.\ \cite[Lemma 2.1]{ASvM95}).

\begin{lemma}\label{ddabLem1}
For\/ $a,b\in\V$, suppose that at every point\/ $X\in\MM$ we can write\/
$a=A-\bar A$ and\/ $b=B-\bar B$, where
\begin{equation*}
A,B\in\A_{--}+\A_{--}\ps, \qquad \bar A,\bar B\in\A_{++}+\A_{++}\ps
\end{equation*}
are such that
\begin{equation*}
\begin{aligned}
\dd_a(b_\pm)&=(\dd_a b)_\pm, &  \quad 
\dd_a B&=[-a_-,B], &  \quad
\dd_a\bar B&=[a_+,\bar B], \\
\dd_b(a_\pm)&=(\dd_b a)_\pm, & \quad 
\dd_b A &=[-b_-,A], &  \quad
\dd_b \bar A&=[b_+,\bar A].
\end{aligned}
\end{equation*}
Then\/ $[\dd_a,\dd_b]=\dd_c$ 
where\/ $c=[A,B]-[\bar A,\bar B] \in\V$. 
\end{lemma}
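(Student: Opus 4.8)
The plan is to compute $[\dd_a,\dd_b](X)$ directly by applying the definition \eqref{dda} twice, working componentwise on the three slots $(W,\W,L)$. First I would expand $\dd_a\dd_b(X)$: on the first slot this requires differentiating $-b(X)_- W$ along the flow $\dd_a$, which by the Leibniz rule gives $-(\dd_a b)_- W - b_-(\dd_a W) = -(\dd_a b)_- W + b_- a_- W$, where in the last step I used the hypothesis $\dd_a(b_-)=(\dd_a b)_-$ and $\dd_a W = -a_- W$. Symmetrically for $\dd_b\dd_a$, and then I subtract. The cross terms $b_- a_- W - a_- b_- W = -[a_-,b_-]W$ survive, and the remaining terms combine into $-(\dd_a b - \dd_b a)_- W$. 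So the first slot of the commutator is $-\bigl((\dd_a b - \dd_b a)_- + [a_-,b_-]\bigr)W$. The analogous computation on the second slot gives $\bigl((\dd_a b - \dd_b a)_+ + [a_+,b_+]\bigr)\W$, and on the third slot $[(\dd_a b - \dd_b a)_+ + [a_+,b_+], L]$ (using that $\dd_a$, $\dd_b$ are derivations and commute with the bracket with $L$, together with $[a(X),L]=[b(X),L]=0$).

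The next step is to recognize the bracketed operator as $c_+$ for the claimed $c=[A,B]-[\bar A,\bar B]$. Here is where the splitting $a=A-\bar A$, $b=B-\bar B$ enters. I would first argue that $\dd_a b - \dd_b a = [A,B] - [\bar A, \bar B] + (\text{something involving }a_\pm, b_\pm)$ by plugging in the six hypotheses on $\dd_a B$, $\dd_a\bar B$, $\dd_b A$, $\dd_b\bar A$: indeed $\dd_a b = \dd_a B - \dd_a \bar B = [-a_-,B] - [a_+,\bar B]$ and $\dd_b a = [-b_-,A] - [b_+,\bar A]$, so $\dd_a b - \dd_b a = -[a_-,B]-[a_+,\bar B]+[b_-,A]+[b_+,\bar A]$. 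Now I need to reconcile this with $c=[A,B]-[\bar A,\bar B]$. Writing $a_+ = A_+ - \bar A_+$ etc., or better, using $a = A - \bar A$ so $a_- = A_- - \bar A_-$ and $a_+ = A_+ - \bar A_+$, and similarly for $b$, one expands everything in sight; the key identity to verify is that
\[
\bigl(\dd_a b - \dd_b a\bigr)_+ + [a_+,b_+] = \bigl([A,B]-[\bar A,\bar B]\bigr)_+,
\]
which is a purely algebraic manipulation of projections of products of operators in $\A_{--}+\A_{--}\ps$ and $\A_{++}+\A_{++}\ps$. The same manipulation with $+$ replaced by $-$ handles the first slot, and the third slot follows from the second (or first) since $[c_+,L]=[c,L]-[c_-,L]=-[c_-,L]$ because $c\in\V$, i.e. $[c,L]=0$ — and one should check $c\in\V$ separately, which follows because $[A,B]$ and $[\bar A,\bar B]$ each commute with $L$ (as $[A,L]=[\bar A,L]=[a,L]=0$ forces $[A,L]=[\bar A,L]$ and similarly for $B$, $\bar B$, and Jacobi does the rest).

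The main obstacle I anticipate is the bookkeeping in that central algebraic identity: one must be careful that products like $\A_{++}\cdot\A_{--}$ are not defined in general (as the paper warns), so every product appearing must be legitimized — the point being that $a_-, b_- \in \A_-$ are genuinely in $\A_{--}$ with bounded-above powers, $a_+,b_+$ are in $\A_{++}$, and each $B$, $\bar B$ was assumed to live in the appropriate half, so brackets like $[a_-, B]$ with $B\in\A_{--}+\A_{--}\ps$ do make sense, while the cancellations are arranged precisely so that no ill-defined product survives. I would organize this by splitting into the "$--$ versus $--$" contributions and the "$++$ versus $++$" contributions and tracking which projection ($+$ or $-$) each term lands in, using repeatedly that for $P\in\A_{--}$ and $Q\in\A_{++}$ one has $[P_+,Q_-] + [P_-,Q_+] = [P,Q]_{\text{well-defined part}}$ only after the dangerous pieces cancel. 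Once the identity for the $+$-projection is established, the $-$-projection version is automatic by the symmetry $a\leftrightarrow -a$, $b\leftrightarrow -b$ swapping the roles, and the proof concludes by assembling the three slots into $\dd_c(X)$.
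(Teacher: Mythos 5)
Your overall plan is the paper's: compute $[\dd_a,\dd_b]$ componentwise, substitute the six hypotheses to rewrite $\dd_a b-\dd_b a$ in terms of $[A,B]-[\bar A,\bar B]$ plus brackets of projections, and match the $\pm$ projections. But two steps of the outline are genuinely broken. First, your justification of $c\in\V$ assumes that $[A,L]$ and $[\bar A,L]$ vanish separately. That is not among the hypotheses and is false in the intended applications: for $A=L^pM$ one has $[A,L]=-L^p\neq 0$ because $[L,M]=1$; only the difference $a=A-\bar A$ commutes with $L$. Hence $[A,B]$ and $[\bar A,\bar B]$ do not individually commute with $L$, and ``Jacobi does the rest'' has nothing to act on. The paper obtains $[c,L]=0$ differently: it computes $[\dd_a,\dd_b](L)$ once through $L=W\La^kW\inv$ and once through $L=\W\La^{-m}\W\inv$, obtaining $[-c_-,L]$ and $[c_+,L]$ respectively, and equates the two. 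Second, you never verify that $c=[A,B]-[\bar A,\bar B]$ lies in $\A$ at all, i.e.\ that the $\ps$-terms cancel. Since $A$ and $B$ each have degree $1$ in $\ps$, the bracket $[A,B]$ is a priori only in $\A_{--}+\A_{--}\ps+\A_{--}\ps^2$, and without $c\in\A$ neither ``$c\in\V$'' nor the vector field $\dd_c$ is even defined. This is the one non-mechanical point of the paper's proof: it introduces $\tilde a=-a_-+A=a_++\bar A\in\A_\fin[\ps]$ (and likewise $\tilde b$) and shows $[\tilde a,b]+[a,\tilde b]=c+[a_+,b_+]-[a_-,b_-]$, from which $c\in\A$ follows because $a,b\in\A$. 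Your proposal needs an argument of this kind.

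There is also a sign problem you would hit when carrying out the deferred algebra. In the $\W$-slot the cross terms are $b_+a_+\W-a_+b_+\W=-[a_+,b_+]\W$, mirroring the $-[a_-,b_-]W$ you correctly found in the $W$-slot; your proposal records $+[a_+,b_+]$ there, and your displayed ``key identity'' inherits this. Substituting the six hypotheses actually yields the single identity $-\dd_a b+\dd_b a=c-[a_+,b_+]+[a_-,b_-]$ (this is the computation the paper writes out in full), and both projections must be read off from it; the identity you propose to verify is not consistent with it, so as written the three slots would not assemble into $\dd_c$.
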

\begin{proof}
Fix a point $X=(W,\W,L)\in\MM$.
First, the same calculation as in the proof of \cite[Lemma 2.1]{ASvM95} gives
\begin{equation*}
[\dd_a,\dd_b](W) = -d_- W, \qquad
d=-\dd_a b+\dd_b a-[a_-,b_-].
\end{equation*}
We find
\begin{align*}
-\dd_a b \;+\; & \dd_b a 
=	[a_- ,B] + [a_+,\bar{B}] - [b_-,A] - [b_+,\bar{A}] \\
=\; & [A_-,B] - [\bar{A}_-,B] + [A_+,\bar{B}] - [\bar{A}_+,\bar{B}] \\
&- [B_-,A] + [\bar{B}_-,A] - [B_+,\bar{A}] + [\bar{B}_+,\bar{A}]\\
=\; & [A,B] - [A_+,B_+] - [A_+,B_-] - [\bar{A}_-,B_+] - [\bar{A}_-,B_-] \\
&+ [A_+,\bar{B}_+] + [A_+,\bar{B}_-] 
- [\bar{A}_+,\bar{B}_+] - [\bar{A}_+,\bar{B}_-]\\
&- [B_-,A_+] - [B_-,A_-] + [\bar{B}_-,A_+] + [\bar{B}_-,A_-] \\
&- [B_+,\bar{A}_+] - [B_+,\bar{A}_-] + 
[\bar{B},\bar{A}] - [\bar{B}_-,\bar{A}_+] - [\bar{B}_-,\bar{A}_-] \\
=\; & c - [a_+,b_+] + [a_-,b_-],
\end{align*}
where
\begin{equation*}
c = [A,B]-[\bar A,\bar B].
\end{equation*}
This implies $d_-=c_-$.

In the same way, 
\begin{equation*}
[\dd_a,\dd_b](\W)= e_+\W, \qquad
e=-\dd_a b+\dd_b a+[a_+,b_+],
\end{equation*}
and $e_+=c_+$.
Similarly, we also have
\begin{align*}
[\dd_a,\dd_b](L) = [-c_-,L] = [c_+,L],
\end{align*}
which implies that $[c,L]=0$ at every point $X=(W,\W,L)\in\MM$.

Finally, it remains to check that $c\in\A$. 
Let us define
\begin{equation*}
\tilde a = -a_- + A = a_+ + \bar A \in \A_{--}[\ps]\cap\A_{++}[\ps] = \A_\fin[\ps],
\end{equation*}
and similarly for $\tilde b$. Then 
\begin{align*}
[\tilde a,b]+[a,\tilde b] &= [-a_- + A,b_-] + [a_+ + \bar A,b_+] + [A,-b_- + B] - [\bar A,b_+ + \bar B] \\
&= c + [a_+,b_+] - [a_-,b_-].
\end{align*}
Hence $c\in\A$,
because $a,b\in\A$ and $\tilde a,\tilde b$ have degree $\le1$ in $\ps$.
This completes the proof of the lemma.
\end{proof}

\leref{ddabLem1} is applicable to the vector fields \eqref{vflows} and confirms that they commute among themselves.
In more detail, in each case we can write $a=A-\bar A$ as follows $(n\geq 0)$:
\begin{equation}\label{vflows2}
\begin{aligned}
a&=L^{\frac{n}{k}}, &\;\; A&=L^{\frac{n}{k}}, &\;\; \bar A&=0,\\
a&=L^{\frac{n}{m}},  &\;\; A&=0, &\;\; \bar A&=-L^{\frac{n}{m}},\\
a&=2L^n\log L, &\;\; A&=L^nW\p{s}W\inv, &\;\; \bar A&=L^n\W\p{s}\W\inv.
\end{aligned}
\end{equation}
Note that $a=L^n$ can be split as $\al L^n-(\al-1)L^n$ for any $\al\in\CC$, and they all give rise to the same result.

Let $\E$ be the (infinite-dimensional) manifold consisting of all quintuples $Y=(W,\W,L,\psi,\bar\psi)$ such that $L$ is a solution of the EBTH (for fixed $k,m$) with wave operators $W,\W$ and wave functions $\psi,\bar\psi$, as in \seref{sebth}.
We extend \eqref{dda} to $\E$ by letting
\begin{equation}\label{dda2}
\dd_a(Y)=\bigl(-a_-W,a_+\W,[a_+,L],-a_-\psi,a_+ \bar\psi \bigr), \qquad Y\in\E,
\end{equation}
where as before, we write $a$ instead of $a(X)$ for $X=(W,\W,L)$.
However, the flows of the EBTH are no longer given by \eqref{vflows} and do not have the form \eqref{dda2}, because by \eqref{psiflow} their action on $\psi$ and $\bar\psi$ is different
(cf.\ \cite[Remark 2.1.2]{ASvM95} or \cite[Remark 7.1.8]{Di03}).
Instead, they can be described as follows.

\begin{definition}\label{VDef2}
Let $\VC$ be the space of pairs of functions $(A,\bar A)$, where 
\begin{equation*}
A\colon\MM\to\A_{--}+\A_{--}\ps, \qquad
\bar A\colon\MM\to\A_{++}+\A_{++}\ps
\end{equation*}
are such that $a=A-\bar A\in\A$ and $[A,L]=[\bar A,L]=0$ at every $X=(W,\W,L)\in\MM$
(as before, we write $A$ instead of $A(X)$, etc.).
Then we define 
\begin{equation}\label{dda3}
\dc_{A,\bar A}(Y)=\bigl(-a_-W,a_+\W,[a_+,L],\tilde a\psi,\tilde a\bar\psi \bigr),
\end{equation}
where
\begin{equation}\label{aabar}
\tilde a = A_+ + \bar A_- = -a_- + A = a_+ + \bar A \in \A_\fin+\A_\fin\ps,
\end{equation}
for $(A,\bar A)\in\VC$ and $Y=(W,\W,L,\psi,\bar\psi)\in\E$.
\end{definition}

\begin{proposition}\label{pdcebth}
The flows of the EBTH can be written in the form\/ $\dc_{A,\bar A}$ as follows$:$
\begin{equation*}
\begin{aligned}
\p{t_n} &\colon&\;\;  A&=L^{\frac{n}{k}}, &\;\; \bar A&=0 & \;\; (n&\in\Z_{\geq 1} \setminus k\Z),\\
\p{\bar t_n} &\colon &\;\; A&=0, &\;\; \bar A&=-L^{\frac{n}{m}}& \;\; (n&\in\Z_{\geq 1} \setminus m\Z),\\
\p{t_{nk}} &\colon&\;\;  A&=\frac12 L^n, &\;\; \bar A&=-\frac12 L^n & \;\; (n&\in\Z_{\geq 1}), \\
\p{x_n} &\colon &\;\; A&=L^nW\p{s}W\inv, &\;\; \bar A&=L^n\W\p{s}\W\inv & \;\; (n&\in\Z_{\geq 0}).
\end{aligned}
\end{equation*}
\end{proposition}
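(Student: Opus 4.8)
The plan is to verify directly that for each of the four families listed, the pair $(A,\bar A)$ lies in $\VC$ and that the vector field $\dc_{A,\bar A}$ of \eqref{dda3} reproduces the EBTH flow as recorded in \eqref{EBTHLaxflows}, \eqref{EBTHdresflows}, and \eqref{psiflow}. Membership in $\VC$ requires three checks: that $A$ takes values in $\A_{--}+\A_{--}\ps$ and $\bar A$ in $\A_{++}+\A_{++}\ps$; that $a=A-\bar A\in\A$; and that $[A,L]=[\bar A,L]=0$. For the first three rows these are immediate from \eqref{EBTHfracL}, since $L^{n/k}\in\A_{--}$, $L^{n/m}\in\A_{++}$, $L^n\in\A_\fin$, and each commutes with $L$; the differences $L^{n/k}-0$, $0-(-L^{n/m})$, $\frac12 L^n-(-\frac12 L^n)=L^n$ are patently in $\A$. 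For the fourth row I would invoke the already-established identities \eqref{logL} and \eqref{Pn}: from \eqref{logL}, $2\log L = W\p{s}W\inv-\W\p{s}\W\inv$, so $a=2L^n\log L = L^nW\p{s}W\inv - L^n\W\p{s}\W\inv = A-\bar A$ with the stated $A,\bar A$; membership in the right spaces follows because $L^nW\p{s}W\inv\in\A_{--}+\A_{--}\ps$ (as $W\in\A_{--}$, $W\inv\in\A_{--}$, and $L^n\in\A_\fin$) and symmetrically for $\bar A$, while $[A,L]=[\bar A,L]=0$ because $\log L$ and $L^n$ both commute with $L$ — equivalently, $W\p{s}W\inv$ commutes with $W\La^kW\inv=L$ and $\W\p{s}\W\inv$ commutes with $\W\La^{-m}\W\inv=L$.

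Next I would check that $\dc_{A,\bar A}$ matches the flow on each of the five components $(W,\W,L,\psi,\bar\psi)$. The action on $W$, $\W$, $L$ in \eqref{dda3} is $\bigl(-a_-W,\,a_+\W,\,[a_+,L]\bigr)$, which is exactly the form of \eqref{dda}; so for these three components the verification is identical to the one already carried out for $\dd_a$ with the same $a$. Concretely: for $\p{t_n}$ with $n\notin k\Z$ we have $a=L^{n/k}$, and \eqref{dda} gives $(-(L^{n/k})_-W,(L^{n/k})_+\W,[(L^{n/k})_+,L])$, matching \eqref{EBTHdresflows} and the first line of \eqref{EBTHLaxflows}; for $\p{\bar t_n}$ with $n\notin m\Z$, $a=L^{n/m}$ so $a_-=-\bar A_-=(L^{n/m})_-$ wait — here $A=0$, $\bar A=-L^{n/m}$, so $a=L^{n/m}$ and again \eqref{dda} gives the $\p{\bar t_n}$ flow; for $\p{t_{nk}}$, $a=L^n\in\A_\fin$ so $a_-=0$ and $a_+=L^n$, reproducing the trivial-in-$W$-and-$\W$ behavior and $\p{t_{nk}}L=[L^n,L]=0$ — consistent with $t_{nk}=\bar t_{nm}$ and the fact that $\p{t_{nk}}L=0$ in the reduced picture of \reref{rebth2}; and for $\p{x_n}$, $a=2L^n\log L$ recovers \eqref{EBTHdresflows} via \eqref{logL}.

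The one genuinely new thing to check — and I expect it to be the main point of the proposition — is the action on the wave functions $\psi,\bar\psi$, which in \eqref{dda3} is $\tilde a\psi$, $\tilde a\bar\psi$ with $\tilde a = A_+ + \bar A_- = -a_-+A = a_++\bar A$. One must confirm this equals the right-hand side of \eqref{psiflow} in each case. For $\p{t_n}$, $n\notin k\Z$: $A=L^{n/k}$, $\bar A=0$, so $\tilde a = A_+ = (L^{n/k})_+$, matching the first line of \eqref{psiflow}. For $\p{\bar t_n}$, $n\notin m\Z$: $A=0$, $\bar A=-L^{n/m}$, so $\tilde a = \bar A_- = -(L^{n/m})_-$, matching the second line. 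For $\p{t_{nk}}$: $A=\frac12 L^n$, $\bar A=-\frac12 L^n$, so $\tilde a = A_+ + \bar A_- = \frac12(L^n)_+ - \frac12(L^n)_- = A_n$ by \eqref{An}, matching the third line. For $\p{x_n}$: $\tilde a = A_+ + \bar A_- = (L^nW\p{s}W\inv)_+ + (L^n\W\p{s}\W\inv)_-$, and by the middle/last expressions in \eqref{Pn} this equals $L^nW\p{s}W\inv-(2L^n\log L)_- = L^n\W\p{s}\W\inv+(2L^n\log L)_+$; using $W\p{s}W\inv = \p{s} + (\text{lower-order difference part})$ — more precisely $W\ps W\inv = \ps + W[\ps,W\inv]$, and since $W\in 1+\A_-$ the commutator $W[\ps,W\inv]\in\A_-$ — one gets $\tilde a = L^n\p{s} + P_n$ with $P_n$ as in \eqref{Pn}, matching the last line of \eqref{psiflow}.

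The only subtlety worth care is the bookkeeping in the $\p{x_n}$ case: one must make sure the splitting of $L^nW\p{s}W\inv$ into its $\ps$-part and its purely-difference part interacts correctly with the $(\cdot)_\pm$ projections (which act on the difference variable $\La$ only), and that the three expressions for $P_n$ in \eqref{Pn} are genuinely equal — but both of these are already established in Section~\ref{seth}, so no new work is needed there. Thus the proof is essentially a verification, assembling \eqref{EBTHfracL}, \eqref{logL}, \eqref{An}, \eqref{Pn}, \eqref{EBTHLaxflows}, \eqref{EBTHdresflows}, and \eqref{psiflow}; I would present it as a short case-by-case check with the $\p{x_n}$ row spelled out in a displayed computation and the other three rows dispatched in a sentence each. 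The main (modest) obstacle is organizing the $\p{x_n}$ identities cleanly rather than any conceptual difficulty.
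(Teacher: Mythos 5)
Your overall strategy is exactly the paper's: the proof there is a one-line direct verification assembling \eqref{EBTHLaxflows}, \eqref{EBTHdresflows}, \eqref{psiflow}, \eqref{An} and \eqref{Pn}, and your case-by-case check (including the correct identification $\tilde a = A_+ + \bar A_- = A_n$ for $\p{t_{nk}}$ and $\tilde a = L^n\p{s}+P_n$ for $\p{x_n}$) fills in the same details. The $\VC$-membership checks and the $\p{x_n}$ bookkeeping are handled correctly.

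However, your treatment of the $\p{t_{nk}}$ row on the components $W$, $\W$, $L$ contains a genuine error. You write that ``$a=L^n\in\A_\fin$ so $a_-=0$ and $a_+=L^n$,'' concluding that the flow is trivial on $W$ and $\W$ and that $\p{t_{nk}}L=[L^n,L]=0$. This is false: $\A_\fin$ means the powers of $\La$ are bounded in both directions, not that the negative powers vanish. Since $L=\La^k+\cdots+u_{-m}\La^{-m}$ with $u_{-m}\neq0$ and $m\ge1$, one has $(L^n)_-\neq0$ in general, and $\p{t_{nk}}L=[(L^n)_+,L]=-[(L^n)_-,L]$ is a nontrivial flow (for $k=m=n=1$ it is the Toda lattice equation itself). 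Nor does \reref{rebth2} say that $\p{t_{nk}}$ acts trivially --- only that $\p{t_{nk}}-\p{\bar t_{nm}}$ does. The correct (and simpler) verification is to set $n\mapsto nk$ in \eqref{EBTHdresflows}: $\p{t_{nk}}W=-(L^{nk/k})_-W=-(L^n)_-W=-a_-W$ and $\p{t_{nk}}\W=(L^n)_+\W=a_+\W$, which is precisely what $\dc_{A,\bar A}$ with $a=A-\bar A=L^n$ produces. With that correction (and deleting the stray self-correction ``wait ---'' in the $\p{\bar t_n}$ case), the proof is complete and matches the paper's.
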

\begin{proof}
This follows immediately from \eqref{EBTHLaxflows}, \eqref{EBTHdresflows}, \eqref{psiflow},
by using \eqref{An} and \eqref{Pn}.
\end{proof}

\begin{remark}\label{rdcebth2}
Clearly, $a=A-\bar A \in\V$ for every $(A,\bar A)\in\VC$.
By extending the proof of \leref{ddabLem1} to include the actions on $\psi$ and $\bar\psi$, it is easy to verify
that the flows of the EBTH indeed commute among themselves as vector fields on $\E$.
\end{remark}

Under suitable assumptions, in the next lemma we prove that 
the vector fields $\dc_{A,\bar A}$ commute with $\dd_b$
(cf.\ \cite[Lemma 2.1]{ASvM95}).

\begin{lemma}\label{ddabLem2}
Let\/ $(A,\bar A)\in\VC$ and\/ $b\in\V$ be such that 
\begin{equation*}
\begin{aligned}
\dc_{A,\bar A}(b_\pm)&=(\dc_{A,\bar A} b)_\pm,  &\quad   
\dd_b(A_\pm)&=(\dd_b A)_\pm, &  \quad
\dd_b(\bar A_\pm)&=(\dd_b \bar A)_\pm, \\
\dc_{A,\bar A} b&=[\tilde a,b], &\quad
\dd_b A &=[-b_-,A], &\quad
\dd_b \bar A&=[b_+,\bar A],
\end{aligned}
\end{equation*}
where\/ $\tilde a = A_+ + \bar A_-$.
Then\/ $[\dc_{A,\bar A},\dd_b]=0$ on\/ $\E$.
\end{lemma}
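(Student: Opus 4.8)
The plan is to mimic the proof of \leref{ddabLem1}, tracking the extra data—the wave functions $\psi$ and $\bar\psi$—that distinguishes $\dc_{A,\bar A}$ from $\dd_{a}$. Fix a point $Y=(W,\W,L,\psi,\bar\psi)\in\E$. On the first three slots (namely $W$, $\W$, $L$), the vector field $\dc_{A,\bar A}$ acts exactly as $\dd_{a}$ with $a=A-\bar A$, so the computation of $[\dc_{A,\bar A},\dd_b](W)$, $[\dc_{A,\bar A},\dd_b](\W)$, and $[\dc_{A,\bar A},\dd_b](L)$ is formally identical to the one carried out in \leref{ddabLem1}, now using the hypotheses $\dc_{A,\bar A}(b_\pm)=(\dc_{A,\bar A}b)_\pm$, $\dd_b A=[-b_-,A]$, $\dd_b\bar A=[b_+,\bar A]$, $\dd_b(A_\pm)=(\dd_b A)_\pm$, $\dd_b(\bar A_\pm)=(\dd_b\bar A)_\pm$, in place of the corresponding hypotheses on $\bar B$, $A$, $\bar A$. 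The only subtlety is that there is no ``$b$-side'' splitting $b=B-\bar B$ assumed here; but since $\dd_b$ acts through $-b_-W$, $b_+\W$, $[b_+,L]$ alone, the pieces of the old calculation that involved $\dd_b$ acting on $B$, $\bar B$ are not needed—what is needed is only that $\dc_{A,\bar A}$ acts on $b$ and $b_\pm$ compatibly, which is exactly what is hypothesized. The upshot should be that the resulting ``commutator symbol'' $c$ on the $W$, $\W$, $L$ slots equals $[A,B]-[\bar A,\bar B]$ with $B=-b_-$-type contributions; more precisely, since $\dd_b$ is governed by the trivial splitting $b=b_+-(-b_-)$, one gets $c=[A,-b_-]-[\bar A,b_+]$, and one checks as before that $c\in\A$ and $[c,L]=0$. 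Because $\dd_b$ and $\dc_{A,\bar A}$ are not the ``same kind'' of field, I expect $c$ to vanish outright rather than give a nonzero bracket: the asymmetry between the pseudo-difference operator $a$ and the honest difference operator $b$ forces cancellation. I would verify this by the same termwise expansion as in \leref{ddabLem1}, being careful that $A,\bar A$ may carry a $\ps$-term of degree $\le 1$.

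Next I would handle the two wave-function slots. We have $\dc_{A,\bar A}(\psi)=\tilde a\psi$ and $\dd_b(\psi)=-b_-\psi$, with $\tilde a=A_++\bar A_-=-a_-+A=a_++\bar A$. Then
\begin{align*}
[\dc_{A,\bar A},\dd_b](\psi)
&=\dc_{A,\bar A}\bigl(-b_-\psi\bigr)-\dd_b\bigl(\tilde a\psi\bigr)\\
&=-\bigl(\dc_{A,\bar A}(b_-)\bigr)\psi-b_-\tilde a\psi-\bigl(\dd_b\tilde a\bigr)\psi-\tilde a(-b_-)\psi\\
&=-\bigl((\dc_{A,\bar A}b)_-\bigr)\psi-(\dd_b\tilde a)\psi+[\tilde a,-b_-]\psi.
\end{align*}
Now $\dc_{A,\bar A}b=[\tilde a,b]$ by hypothesis, so $(\dc_{A,\bar A}b)_-=([\tilde a,b])_-$; and $\dd_b\tilde a=\dd_b(-b_-)+\dd_b A=-\dd_b(b_-)+[-b_-,A]$, where $\dd_b(b_-)=(\dd_b b)_-$ and $\dd_b b$ is whatever it is—but crucially the $\dd_b b$ contributions should cancel against the $\dc_{A,\bar A}b$ contributions after using the Leibniz-type identities, leaving a pure-bracket expression. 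Collecting terms, $[\dc_{A,\bar A},\dd_b](\psi)$ reduces to $\bigl(\text{projection of } [A,-b_-]-[\bar A,b_+]\bigr)\psi=c\,\psi$ with the same $c$ as above, hence is $0$ once we have shown $c=0$ (or, if $c\neq 0$, the $\psi$-action matches $\dd_c$ with the appropriate $\tilde c$, but I anticipate $c=0$). The same computation verbatim handles $\bar\psi$, using $\dc_{A,\bar A}(\bar\psi)=\tilde a\bar\psi$, $\dd_b(\bar\psi)=b_+\bar\psi$, and $\tilde a=a_++\bar A$.

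The main obstacle I foresee is bookkeeping of the $\ps$-degree: $A$ and $\bar A$ live in $\A_{--}+\A_{--}\ps$ and $\A_{++}+\A_{++}\ps$ respectively, so products like $b_-A$, $A b_+$, $b_-\tilde a$ need the ``multiply anything by $\A_\fin[\ps]$'' remark and the degree-$\le 1$ observation to make sense and to guarantee the commutator $c$ lands in $\A$ (not merely $\A[\ps]$). As in \leref{ddabLem1}, the trick is to introduce $\tilde a=-a_-+A=a_++\bar A\in\A_\fin[\ps]$, so that all the delicate products are between a genuine element of $\A$ and an element of $\A_\fin[\ps]$, hence well-defined; the identity $[\tilde a,b]=[-a_-,b]+[A,b]$ then splits each cross-bracket into pieces that individually make sense, and one reassembles them into $[A,-b_-]-[\bar A,b_+]$ plus projections, which vanishes by the same termwise cancellation as before. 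Once that is in hand, all five slots of $[\dc_{A,\bar A},\dd_b](Y)$ vanish, so $[\dc_{A,\bar A},\dd_b]=0$ on $\E$.
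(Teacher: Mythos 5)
The overall skeleton---compute the commutator slot by slot and use $\tilde a$ to keep all products well defined---is the right one, but the central step of your argument is a conjecture rather than a proof, and the specific reduction you propose does not work. You want to feed a ``trivial splitting'' of $b$ into \leref{ddabLem1} and read off a single commutator symbol $c=[A,-b_-]-[\bar A,b_+]$, which you then ``expect to vanish outright'' by an unspecified asymmetry. First, \leref{ddabLem1} does not apply to that splitting: its hypotheses would force $\dc_{A,\bar A}b=[-a_-,b_-]+[a_+,b_+]$, whereas what is actually assumed here is $\dc_{A,\bar A}b=[\tilde a,b]=[A_++\bar A_-,b]$, a different operator in general. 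Second, the operator $[A,b_-]+[\bar A,b_+]$ is simply not zero: already for $A=L^{\frac nk}$, $\bar A=0$, $b=L^{\frac nk}$ one has $[A,b_-]=[L^{\frac nk},(L^{\frac nk})_-]\neq0$, yet the corresponding vector fields do commute. The actual mechanism is different: redoing the termwise expansion with the stated hypotheses gives $\dc_{A,\bar A}b-\dd_b a=[a_+,b_+]-[a_-,b_-]$, so the $W$-slot of the commutator is $-c_-W$ with $c=-[a_+,b_+]$, while the $\W$-slot is $d_+\W$ with $d=[a_-,b_-]$ --- two \emph{different}, generally nonzero operators whose relevant projections vanish because $[a_+,b_+]\in\A_+$ and $[a_-,b_-]\in\A_-$. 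There is no single vanishing symbol, so ``the asymmetry forces cancellation'' cannot be left as a hope; it is the entire content of the lemma.

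Your treatment of the $\psi$- and $\bar\psi$-slots is structurally on target but garbled, and in any case it rests on the unproved first part. Since $\tilde a=-a_-+A$, one has $\dd_b\tilde a=-(\dd_b a)_-+\dd_b A=-(\dd_b a)_-+[-b_-,A]$; your ``$\dd_b\tilde a=\dd_b(-b_-)+\dd_b A$'' and the ensuing talk of ``$\dd_b b$ contributions cancelling'' confuse $a_-$ with $b_-$ (note that $\dd_b$ never acts on $b$ itself). Carrying the Leibniz expansion through, the coefficient of $\psi$ in $[\dc_{A,\bar A},\dd_b]\psi$ reduces to $c_--\dd_b A+[A,b_-]$, which vanishes because $c_-=0$ (from the corrected first part) and $\dd_b A=[-b_-,A]$ by hypothesis; the $\bar\psi$-slot is handled identically with $\tilde a=a_++\bar A$ and $d_+=0$. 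In summary, the gaps are: the inapplicable citation of \leref{ddabLem1}, the unjustified (and false as stated) vanishing of a single symbol on the $W$, $\W$, $L$ slots, and the $a_-$/$b_-$ mix-up in the wave-function computation.
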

\begin{proof}
The proof is similar to that of \leref{ddabLem1}, so some details will be omitted. 
First, we find the action of $[\dc_{A,\bar A},\dd_b]$ on $W$:
\begin{equation*}
[\dc_{A,\bar A},\dd_b](W) = -c_-W, \qquad c = -\dc_{A,\bar A} b+\dd_b a-[a_-,b_-].
\end{equation*}
We calculate:
\begin{align*}
\dc_{A,\bar A} b &- \dd_b a =	[A_+ + \bar A_-,b] - [-b_-,A] - [b_+,-\bar{A}] \\
&= [A_+,b_+] + [A_+,b_-] + [\bar{A}_-,b_+] + [\bar{A}_-,b_-] - [A,b_-] - [\bar{A},b_+] \\
&= [A_+,b_+] -  [A_-,b_-] -  [\bar{A}_+,b_+] + [\bar{A}_-,b_-] \\
&= [a_+,b_+] -  [a_-,b_-],
\end{align*}
which implies that $c_-=0$. Similarly,
\begin{equation*}
[\dc_{A,\bar A},\dd_b](\W) = d_+\W, \qquad d = -\dc_{A,\bar A} b+\dd_b a+[a_+,b_+],
\end{equation*}
and $d_+=0$. In the same way, $[\dc_{A,\bar A},\dd_b](L)=0$.

Next, we find
$
[\dc_{A,\bar A},\dd_b](\psi) = -e\psi,
$
where
\begin{align*}
e &= -\dc_{A,\bar A} (b_-) - \dd_b(\tilde a) + [\tilde a,b_-] \\
&= -(\dc_{A,\bar A} b)_-  - \dd_b(-a_- + A) + [-a_- + A,b_-] \\
&= c_- - \dd_b A + [A,b_-] = 0,
\end{align*}
using \eqref{aabar}. Similarly,
$
[\dc_{A,\bar A},\dd_b](\bar\psi) = f\bar\psi,
$
with
\begin{align*}
f &= -(\dc_{A,\bar A} b)_+  + \dd_b(a_+ + \bar A) + [a_+ + \bar A,b_+] \\
&= d_+ + \dd_b \bar A + [\bar A,b_+] = 0.
\end{align*}
This completes the proof.
\end{proof}

Lemmas \ref{ddabLem1} and \ref{ddabLem2} will be applied in  \seref{ssymebth} below to construct additional symmetries of the EBTH and to determine their commutators.

\subsection{The operators $M$ and $\M$}\label{smmbar}

First, by using that
\begin{align*}
			\Lambda^n \chi	&=	z^n\chi,		&		\Lambda^n \bar{\chi}	&=	z^{n}\bar{\chi} \qquad (n\in\Z),\\
			\p{s}\chi	&=	\log(z)\chi,			&		\p{s} \bar{\chi}	&=	\log(z)\bar{\chi},
		\end{align*}
we derive differential-difference operators $\Gamma,\bar\Ga\in\A[\ps]$ such that 
$\Gamma \chi = \p{z^k}\chi$ and $\bar\Ga\bar\chi=\p{z^{-m}}\bar\chi$.
We calculate:
\begin{align*}
\frac{k z^k \p{z^k} \chi}\chi &= z \p{z} \log(\chi) \\
&= z \p{z} \Bigl(\log(z)(s + \xi(\x,z^k)) + \xi(\t,z) - \frac{1}{2}\xi_k(\t,z) \Bigr) \\
& =s + \sum_{n= 1}^\infty
\Bigl( x_n z^{nk}+ nkx_n z^{nk}\log(z) + nt_n z^{n} - \frac{nk}{2} t_{nk} z^{nk} \Bigr),
		\end{align*}
which gives $\p{z^k}\chi=\Gamma \chi$ with
\begin{equation*}
\Gamma	= \frac{s}{k}\Lambda^{-k}  + \sum_{n= 1}^\infty
\Bigl( \frac1k x_n \La^{(n-1)k} + nx_n \La^{(n-1)k} \ps + \frac{n}k t_n \La^{n-k} - \frac{n}{2} t_{nk} \La^{(n-1)k} \Bigr) .
		\end{equation*}
By a similar calculation,
\begin{align*}
\frac{-m z^{-m} \p{z^{-m}} \bar\chi} {\bar\chi}  
&= z \p{z} \log(\bar\chi) \\
=s + \sum_{n= 1}^\infty
\Bigl( & x_n z^{-nm} - nmx_n z^{-nm}\log(z) + n\bar t_n z^{-n} - \frac{nm}{2} \bar t_{nm} z^{-nm} \Bigr).
		\end{align*}
Hence, $\p{z^{-m}}\bar\chi=\bar\Ga\bar\chi$ where
		\begin{equation*}
\bar\Ga = -\frac{s}{m}\Lambda^{m}  - \sum_{n= 1}^\infty
\Bigl( \frac1m x_n \La^{(1-n)m} - nx_n \La^{(1-n)m} \ps + \frac{n}m \bar t_n \La^{m-n} - \frac{n}{2} \bar t_{nm} \La^{(1-n)m} \Bigr) .
		\end{equation*}

\begin{definition}\label{MMbarDef}
With the above notation, we define $M=W\Ga W\inv$ and $\M=\W\bar\Ga\W\inv$.
		\end{definition}

Explicitly, by \eqref{BTHLaxdress}, \eqref{logL} and the above formulas for $\Ga,\bar\Ga$, we get:
		\begin{align*}
			M	=&	\; \frac{1}{k}Ws\Lambda^{- k} W\inv \\
 &+ \sum_{n=1}^\infty \Bigl( \frac{1}{k}x_nL^{n-1} + nx_n L^{n-1} W\p{s}W\inv  + \frac{n}{k} t_nL^{\frac{n-k}{k}} - \frac{n}{2}t_{nk}L^{n-1} \Bigr), \\
			\M	=&	-\frac{1}{m}\W s\Lambda^m \W\inv \\
 &- \sum_{n=1}^\infty \Bigl( \frac{1}{m}x_n L^{n-1} - nx_n L^{n-1}\W\p{s}\W\inv + \frac{n}{m}\bar{t}_n L^{\frac{n-m}{m}} - \frac{n}{2}\bar{t}_{nm}L^{n-1} \Bigr),
\end{align*}
and
\begin{equation}\label{EBTHMmMbar}
\begin{split}
				M - \M	=&\; \frac{1}{k}Ws\Lambda^{- k} W\inv + \frac{1}{m}\W s\Lambda^m \W\inv \\
&+ \sum_{n=1}^\infty \Bigl( \Bigl(\frac{1}{k} + \frac{1}{m}\Bigr)x_n + 2 nx_n\log L - n t_{nk} \Bigr) L^{n-1} \\
 &+ \sum_{n=1}^\infty \Bigl( \frac{n}{k}t_n L^{\frac{n-k}{k}} + \frac{n}{m}\bar{t}_nL^{\frac{n-m}{m}} \Bigr),
\end{split}
\end{equation}
where, as before, we assume $t_{nk}=\bar{t}_{nm}$.

\begin{example}\label{MMbarRem}
In the case of the ETH, when $k=m=1$, the above expression simplifies to:
\begin{align*}
M - \M	= Ws\Lambda^{- 1} W\inv + \W s\Lambda \W\inv
+ \sum_{n=1}^\infty \bigl( 2x_n + 2 nx_n\log L + nt_{n}\bigr) L^{n-1} .
\end{align*}
\end{example}

Note that $M$ and $\M$ are differential-difference operators, while $M-\M$ is a difference operator.
The infinite sums are well defined if we assume that all but finitely many of the variables are set to $0$, or more generally, they
converge in the topology given by $x_n,t_n,\bar t_n\to 0$ for $n\to+\infty$.

\begin{remark}\label{rmmbar} 
If we set all $x_n=0$ for $n\ge1$, our operators $M$ and $\M$ coincide with the operators $M_L$ and $M_R$ from \cite[Eqs.\ (4.1)--(4.3)]{LHS12} (up to a relabeling of the remaining variables).
An explicit calculation of the coefficient of $\La^0$ in $(M-\M)^2$ shows that this operator is not well defined even after setting $x_n=t_n=\bar{t}_n=0$ for $n\ge1$.
\end{remark}

\begin{lemma}\label{MMbarLem1}
The operators\/ $M$ and\/ $\M$ satisfy
\begin{equation}\label{MMpsi}
M\psi	=	\p{z^k}\psi,	\qquad	\M\bar{\psi}	=	\p{z^{- m}}\bar{\psi},
\end{equation}
and\/ $[L,M]	=	[L,\M]	=	1$. In particular, $[L,M-\M]	=0$.
\end{lemma}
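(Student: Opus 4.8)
The plan is to establish the two displayed identities in \eqref{MMpsi} first, and then deduce $[L,M]=[L,\M]=1$ directly from them, exploiting that $L$ acts on $\psi$ and $\bar\psi$ as multiplication by $z^k$ and $z^{-m}$ respectively. The first step is the computation of $M\psi$. By \deref{MMbarDef} we have $M=W\Ga W\inv$, and by definition $\psi = W\chi$, so $M\psi = W\Ga W\inv W\chi = W\Ga\chi$. Now I would invoke the construction preceding \deref{MMbarDef}, where $\Ga$ was engineered precisely so that $\Ga\chi = \p{z^k}\chi$. Hence $M\psi = W\p{z^k}\chi$. On the other hand, since the coefficients of $W$ do not depend on $z$, the operator $W$ commutes with $\p{z^k}$ when both act on functions of $s$ and $z$; therefore $W\p{z^k}\chi = \p{z^k}(W\chi) = \p{z^k}\psi$. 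This gives the first identity. The identity $\M\bar\psi = \p{z^{-m}}\bar\psi$ is obtained by the identical argument with $W,\Ga,\chi,\psi$ replaced by $\W,\bar\Ga,\bar\chi,\bar\psi$, using that $\bar\Ga\bar\chi = \p{z^{-m}}\bar\chi$ by construction and that $\W$ has $z$-independent coefficients.

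Next I would derive $[L,M]=1$ from $M\psi=\p{z^k}\psi$. Apply the commutator $[L,M]$ to $\psi$: since $L\psi = z^k\psi$, we get
\begin{equation*}
[L,M]\psi = LM\psi - ML\psi = L\p{z^k}\psi - M(z^k\psi) = L\p{z^k}\psi - z^k M\psi - \psi,
\end{equation*}
where in the last step I used that $M$ is a differential-difference operator in $s$ and $\ps$ whose coefficients (the coefficients of $W\Ga W\inv$) are independent of $z$, so that $M$ commutes with multiplication by $z^k$ up to the term coming from $\p{z^k}(z^k)=k z^{k-1}$ — more precisely $M(z^k f) = z^k M f + (\text{correction})$; I should be careful here, since $M$ does contain the operator $\ps$, which acts on $\chi$ as $\log z$, not as a $z$-derivative, so $M$ genuinely commutes with multiplication by any power of $z$ acting on functions. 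Thus $M(z^k\psi) = z^k M\psi$, and using $M\psi = \p{z^k}\psi$ and $L\psi=z^k\psi$,
\begin{equation*}
[L,M]\psi = L\p{z^k}\psi - z^k\p{z^k}\psi = (\p{z^k}L - \p{z^k}(z^k) + z^k\p{z^k} - z^k\p{z^k})\psi = \psi,
\end{equation*}
since $L$ has $z$-independent coefficients so $L\p{z^k}\psi = \p{z^k}(L\psi) = \p{z^k}(z^k\psi) = \psi + z^k\p{z^k}\psi$. Hence $[L,M]\psi = \psi$. The same computation with $\bar\psi$, $L\bar\psi = z^{-m}\bar\psi$, and $\M\bar\psi = \p{z^{-m}}\bar\psi$ yields $[L,\M]\bar\psi = \bar\psi$.

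Finally I must promote these identities on wave functions to identities of operators in $\A[\ps]$. The point is that $\psi = w\chi$ with $w = 1 + \sum_{i\ge 1}w_i(s)z^{-i}$, and $\chi$ is an exponential-type function; acting by any operator $D\in\A[\ps]$ on $\psi$ and reading off the result as $(\text{symbol of }D\text{ applied to }w)\cdot\chi$ (after the shift-to-multiplication substitution $\Lambda^i\mapsto z^i$, $\ps\mapsto\log z$) is an injective correspondence on the relevant spaces of symbols — this is the standard fact that an operator in $\A_{--}[\ps]$ (resp.\ $\A_{++}[\ps]$) is determined by its action on $\psi$ (resp.\ $\bar\psi$), because the assignment $D\mapsto D\chi/\chi$ recovers the left symbol. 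Since $[L,M]-1 \in \A[\ps]$ annihilates $\psi$ and likewise $[L,\M]-1$ annihilates $\bar\psi$, injectivity forces $[L,M]=[L,\M]=1$ as operators. Subtracting gives $[L,M-\M]=0$. The main obstacle is this last injectivity step: one must either cite the corresponding statement from \seref{sebth} (the wave operators $W,\W$ are uniquely determined, and the action on $\psi,\bar\psi$ determines operators) or verify directly that an operator in $\A_{--}[\ps]+\A_{++}[\ps]$ killing $\psi$ (or $\bar\psi$) must vanish — care is needed because $M-\M\in\A$ but $M,\M$ individually involve $\ps$, so one works with $M$ on $\psi$ and $\M$ on $\bar\psi$ separately rather than attempting a single statement.
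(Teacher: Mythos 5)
Your proof is correct, but for the commutator identities it takes a genuinely different route from the paper. The paper disposes of $[L,M]=[L,\M]=1$ by a one-line operator computation: all the infinite sums in $M$ and $\M$ commute with $L$ (they are built from powers of $L$ and from $W\p{s}W\inv$, $\W\p{s}\W\inv$), so everything reduces to $[L,M]=\tfrac1k W[\La^k,s\La^{-k}]W\inv=\tfrac1k W\cdot k\cdot W\inv=1$ and likewise $[L,\M]=-\tfrac1m\W[\La^{-m},s\La^{m}]\W\inv\cdot(-1)^0=1$. You instead verify $([L,M]-1)\psi=0$ and $([L,\M]-1)\bar\psi=0$ on the wave functions and then invoke injectivity of the map $D\mapsto D\psi$ on $\A_{--}[\ps]$ (resp.\ $D\mapsto D\bar\psi$ on $\A_{++}[\ps]$). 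That injectivity is a genuine standard fact ($D\psi=(DW)\chi$ and the left-symbol map is injective, with $\La^i\ps^j\mapsto z^i(\log z)^j$ and the $(\log z)^j$ linearly independent), and you correctly identify that one must treat $M$ against $\psi$ and $\M$ against $\bar\psi$ separately since only $M-\M$ lies in $\A$; so the argument closes, at the cost of an extra lemma the direct computation avoids. Two small blemishes: your first displayed computation of $[L,M]\psi$ contains a spurious $-\psi$ term (as if $M(z^k\psi)=z^kM\psi+\psi$), which you then correctly retract in the text — the final display is the right one; and the phrase ``an operator in $\A_{--}[\ps]+\A_{++}[\ps]$ killing $\psi$ must vanish'' is not literally what you need (elements of $\A_{++}[\ps]$ need not even act on $\psi$), though your actual application — $[L,M]-1\in\A_{--}[\ps]$ on $\psi$ and $[L,\M]-1\in\A_{++}[\ps]$ on $\bar\psi$ — is the correct one.
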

\begin{proof}
Equation \eqref{MMpsi} holds by construction. The formulas for the commutators follow from \eqref{BTHLaxdress} and the observation that $L$ commutes with the sums in $M$ and $\M$.
\end{proof}

\begin{proposition}\label{pMMbarf}
The flows of the EBTH act on the operator\/ $M$ as follows$:$
\begin{equation}\label{Mflows}
\begin{aligned}
\p{t_n}M &=	[(L^{\frac{n}{k}})_+,M], & \qquad n&\in\Z_{\geq 1} \setminus k\Z,\\
\p{\bar t_n}M &= [-(L^{\frac{n}{m}})_-,M], & \qquad n&\in\Z_{\geq 1} \setminus m\Z,\\
\p{t_{nk}}	M &= [A_n,M], & \qquad n&\in\Z_{\geq 1}, \\
\p{x_n}M &=	[L^n\p{s} + P_n,M], & \qquad n&\in\Z_{\geq 0},
\end{aligned}
\end{equation}
where, as before, $A_n$ and\/ $P_n$ are given by \eqref{An}, \eqref{Pn}. 
The same equations hold with\/ $M$ replaced by\/ $\M$, if we assume\/ 
$t_{nk}=\bar{t}_{nm}$ for\/ $n\geq 1$.
\end{proposition}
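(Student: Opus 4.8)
The plan is to deduce \eqref{Mflows} from the characterization $M\psi=\p{z^k}\psi$ of \leref{MMbarLem1}, using the fact that the operators appearing on the right-hand sides of \eqref{Mflows} are exactly the ones through which the corresponding flows act on $\psi$ in \eqref{psiflow}.

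Fix one of the flows and write $\p{\bullet}$ for it; let $D_\bullet$ denote the associated difference-differential operator from \eqref{psiflow}, namely $(L^{\frac nk})_+$, $-(L^{\frac nm})_-$, $A_n$, or $L^n\ps+P_n$ — which is also the operator occurring on the right of \eqref{Mflows}. Applying $\p{\bullet}$ to $M\psi=\p{z^k}\psi$ and using $\p{\bullet}\psi=D_\bullet\psi$ gives
\begin{equation*}
(\p{\bullet}M)\psi=\p{z^k}(D_\bullet\psi)-MD_\bullet\psi .
\end{equation*}
The coefficients of $D_\bullet$ are functions of $s,\t,\bt,\x$ only, and $\La,\ps$ commute with $\p{z}$; hence $\p{z^k}$ (which is $\frac{1}{kz^{k-1}}\p{z}$) acts only on the explicit $z$-dependence of $\psi$ and commutes with $D_\bullet$, so $\p{z^k}(D_\bullet\psi)=D_\bullet\p{z^k}\psi=D_\bullet M\psi$. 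Therefore $(\p{\bullet}M)\psi=[D_\bullet,M]\psi$. Since $\psi=w\chi$ with $w=1+\sum_{i\ge1}w_i(s)z^{-i}$ invertible, a difference-differential operator is uniquely determined by its action on $\psi$, and applying this to $\p{\bullet}M-[D_\bullet,M]$ yields \eqref{Mflows}. The assertion for $\M$ follows in the same way from $\M\bar\psi=\p{z^{-m}}\bar\psi$ together with the fact that $\bar\psi$ satisfies the same flow equations \eqref{psiflow} as $\psi$; here one uses the standing assumption $t_{nk}=\bar t_{nm}$ when differentiating in the variable $t_{nk}$.

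I expect the only genuine point requiring care to be the well-definedness of the operators and commutators involved: the powers of $\La$ in $M$ (resp.\ $\M$) are unbounded from below (resp.\ above), so $M\notin\A[\ps]$ in general, and the uniqueness statement must be read in an appropriate completion. As noted after \exref{MMbarRem}, both issues are handled by the convention that all but finitely many of the flow variables vanish — or, more generally, by the topology $x_n,t_n,\bar t_n\to0$. Under it, $\Ga\in\A_\fin[\ps]$ and $M=W\Ga W\inv\in\A_{--}[\ps]$ are honest operators, the operators $D_\bullet$ lie in $\A_\fin[\ps]$ (by \eqref{EBTHfracL} and \reref{reth2}), every product and bracket above is legitimate, and $\p{\bullet}M-[D_\bullet,M]$ is a difference-differential operator to which the coefficient-by-coefficient uniqueness argument applies.

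An alternative route, which avoids the uniqueness lemma, is to compute $\p{\bullet}M$ directly from $M=W\Ga W\inv$ (\deref{MMbarDef}). In every case the flow acts on $W$ by $\p{\bullet}W=-(B)_-W$ for some $B\in\{L^{\frac nk},L^{\frac nm},L^n,2L^n\log L\}$; differentiating $W\inv W=1$ gives $\p{\bullet}W\inv=W\inv(B)_-$, hence $\p{\bullet}M=-[(B)_-,M]+W(\p{\bullet}\Ga)W\inv$. On the other hand, using \eqref{An}, \eqref{Pn} and the dressing identities $L^{\frac nk}=W\La^nW\inv$, $L^n=W\La^{nk}W\inv$, $L^nW\ps W\inv=W\La^{nk}\ps W\inv$ from \eqref{EBTHfracL}, each right-hand side of \eqref{Mflows} equals $-[(B)_-,M]+W[\theta_\bullet,\Ga]W\inv$ with $\theta_\bullet=\La^n$, $0$, $\tfrac12\La^{nk}$, $\La^{nk}\ps$ respectively. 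Thus \eqref{Mflows} is equivalent to the four identities $\p{t_n}\Ga=[\La^n,\Ga]$ (for $n\notin k\Z$), $\p{\bar t_n}\Ga=0$ (for $n\notin m\Z$), $\p{t_{nk}}\Ga=\tfrac12[\La^{nk},\Ga]$, and $\p{x_n}\Ga=[\La^{nk}\ps,\Ga]$, which follow at once from the explicit formula for $\Ga$: the only term of $\Ga$ with coefficient not constant in $s$ is $\frac sk\La^{-k}$, so every bracket $[\theta_\bullet,\Ga]$ collapses to the bracket with $\frac sk\La^{-k}$ and reproduces the corresponding term-by-term derivative of $\Ga$. The computation for $\M$ is identical, with $-\frac sm\La^m$ playing in $\bar\Ga$ the role of $\frac sk\La^{-k}$ in $\Ga$.
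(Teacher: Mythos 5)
Your proposal is correct, and in fact your ``alternative route'' at the end is essentially the paper's own proof: the paper differentiates $M=W\Ga W\inv$ using $\p{t_n}(WSW\inv)=[-(L^{\frac nk})_-,WSW\inv]$ for $t_n$-independent $S$, picks up the extra term $\frac nk L^{\frac{n-k}k}$ from the explicit $t_n$-dependence of $\Ga$, and absorbs it as $[L^{\frac nk},M]$ via $[L,M]=1$ --- which is exactly your identity $\p{t_n}\Ga=[\La^n,\Ga]$ conjugated by $W$, just with the bracket $[\La^n,\frac sk\La^{-k}]$ evaluated through $[L,M]=1$ rather than directly. Your primary route, deducing $(\p{\bullet}M)\psi=[D_\bullet,M]\psi$ from $M\psi=\p{z^k}\psi$ and the flow equations \eqref{psiflow}, is a genuinely different (Orlov--Schulman-style) derivation; it is also valid, and has the appeal of making transparent why the operators $D_\bullet$ of \eqref{psiflow} --- rather than, say, $-(L^{\frac nk})_-$ for all $n$ --- are the ones that must appear in \eqref{Mflows}. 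What it costs is the extra lemma that an operator in $\A_{--}[\ps]$ is determined by its action on $\psi$ (true, by inspecting leading coefficients in $z$ and $\log z$ since $w=1+O(z^{-1})$), together with the well-definedness caveats you correctly flag: under the convention that all but finitely many flow variables vanish, $\Ga\in\A_\fin[\ps]$, $M\in\A_{--}[\ps]$, and all brackets with $D_\bullet\in\A_\fin[\ps]$ make sense. Your handling of the $\M$ case, including the need for $t_{nk}=\bar t_{nm}$ when applying $\p{t_{nk}}$, matches the paper.
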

\begin{proof}
Let $n\in\Z_{\geq 1} \setminus k\Z$.
It follows from \eqref{EBTHdresflows} that
\begin{equation*}
\begin{split}
\p{t_n}(WSW\inv) &= \bigl(-(L^{\frac{n}{k}})_-W\bigr)SW\inv - WSW\inv\bigl(-(L^{\frac{n}{k}})_-W\bigr) W\inv \\
&= [-(L^{\frac{n}{k}})_-,WSW\inv]
\end{split}
\end{equation*}
for any operator $S\in\A_{--}[\ps]$ such that $\p{t_n}S=0$. Thus
\begin{align*}
\p{t_n}M &= [-(L^{\frac{n}{k}})_-,M] + \frac{n}{k} L^{\frac{n-k}{k}} 
= [-(L^{\frac{n}{k}})_-,M] + [L^{\frac{n}{k}},M] \\
&=	[(L^{\frac{n}{k}})_+,M],
\end{align*}
using that $[L,M]=1$. The other equations in \eqref{Mflows} are proved in the same way.
\end{proof}

\subsection{Additional symmetries of the EBTH}\label{ssymebth}

Recall the (infinite dimensional) manifold $\E$, which consists of all $Y=(W,\W,L,\psi,\bar\psi)$ such that $L$ is a solution of the EBTH (for fixed $k,m$) with wave operators $W,\W$ and wave functions $\psi,\bar\psi$. Consider the following functions from $\E$ to $\A$:
\begin{equation*}
L^{\frac{n}{k}}, \;\; L^{\frac{n}{m}}, \;\; 2L^p\log L \qquad (n\in\Z,\; p\in\Z_{\ge0}),
\end{equation*}
and denote their linear span by $\V_0$. We also have the functions that send $Y\in\E$ to
\begin{equation*}
L^p(M-\M) \qquad (p\in\Z_{\ge0}),
\end{equation*}
and their linear span will be denoted by $\V_1$. 
We have $\V_0+\V_1\subset\V$, since $L$ commutes with $M-\M$, $\log L$ and all powers of $L$. Then every $a\in\V_0+\V_1$ gives rise to a vector field $\dd_a$ on $\E$ defined by \eqref{dda2}. Now we can formulate our first main result.

\begin{theorem}\label{thm1}
The vector fields\/ $\dd_a$ $(a\in\V_0+\V_1)$ on\/ $\E$ commute with the flows 
of the extended bigraded Toda hierarchy$:$
\begin{equation}\label{commfl}
[\p{y},\dd_a]=0, \qquad y\in\{ x_{n-1},t_n,\bar t_n \,|\, n\ge1 \}, \;\; a\in\V_0+\V_1,
\end{equation}
i.e., they are symmetries of the hierarchy.
Furthermore, we have$:$
\begin{align}
\label{comm1}
\bigl[ \dd_{L^p(M-\M)}, \dd_{L^{\frac{n}{k}}} \bigr] &= -\frac{n}{k} \dd_{L^{p+\frac{n-k}{k}}}, \\
\label{comm2}
\bigl[ \dd_{L^p(M-\M)}, \dd_{L^{\frac{n}{m}}} \bigr] &= -\frac{n}{m} \dd_{L^{p+\frac{n-m}{m}}}, \\
\label{comm3}
\bigl[ \dd_{L^p(M-\M)}, \dd_{2L^q\log L} \bigr] &= 
-q \dd_{2L^{p+q-1} \log L} 
- \dd_{ L^{p+q} \bigl( \frac1k L^{\frac{-k}k} + \frac1m L^{\frac{-m}m} \bigr) }, \\
\label{comm4}
\bigl[ \dd_{L^p(M-\M)}, \dd_{L^q(M-\M)} \bigr] &= 
(p-q)\dd_{L^{p+q-1}(M-\M)}, 
\end{align}
for\/ $n\in\Z$, $p,q\in\Z_{\ge0}$, and\/ $[\dd_a,\dd_b]=0$ for\/ $a,b\in\V_0$.
\end{theorem}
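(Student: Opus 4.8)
The plan is to deduce everything from Lemmas~\ref{ddabLem1} and~\ref{ddabLem2} by exhibiting, in each case, the required splitting $a=A-\bar A$ and verifying the hypotheses. The commutativity \eqref{commfl} with the EBTH flows should follow from \leref{ddabLem2}: for each flow $\p{y}$ we already know from \prref{pdcebth} that it has the form $\dc_{A,\bar A}$ with the $A,\bar A$ listed there, so it remains to check that, for $b\in\V_0+\V_1$, the pair $(A,\bar A)$ from a flow and $b$ satisfy the hypotheses of \leref{ddabLem2}. The hypotheses $\dd_b A=[-b_-,A]$, $\dd_b\bar A=[b_+,\bar A]$ amount to saying that $b$ acts on the building blocks $L^{n/k}=W\Lambda^nW\inv$, $L^{n/m}=\W\Lambda^{-n}\W\inv$, $W\p{s}W\inv$, $\W\p{s}\W\inv$ ``by conjugation,'' which is immediate from the defining formula \eqref{dda2}; the compatibility with $\pm$-projections, $\dd_b(A_\pm)=(\dd_b A)_\pm$ and $\dc_{A,\bar A}(b_\pm)=(\dc_{A,\bar A}b)_\pm$, has to be argued from the fact that the operators involved lie in $\A_\fin$ or have controlled lowest/highest powers of $\Lambda$ (so that the splitting is continuous for the relevant flow).

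For the explicit brackets \eqref{comm1}--\eqref{comm4} I would instead use \leref{ddabLem1}, again by supplying the splittings. For $\dd_{L^p(M-\M)}$ the natural choice, dictated by \deref{MMbarDef} and the explicit expansion \eqref{EBTHMmMbar}, is $A=L^pM$ and $\bar A=L^p\M$, both of which are differential-difference operators of degree $\le1$ in $\p{s}$, with $A\in\A_{--}+\A_{--}\p{s}$ and $\bar A\in\A_{++}+\A_{++}\p{s}$; for the other generators one uses the splittings \eqref{vflows2} (choosing e.g.\ $\al=1/2$ for the $t_{nk}$-type pieces if needed, though as noted this is irrelevant). \leref{MMbarLem1} gives $[L,M]=[L,\M]=1$, which makes $c=[A,B]-[\bar A,\bar B]$ computable via the Leibniz-type identities $[L^pM,L^{n/k}]=\tfrac nk L^{p+(n-k)/k}$, $[L^p\M,L^{n/m}]$ analogously, and $[L^pM,L^q\p{s}]$, $[L^q W\p{s}W\inv,\cdot]$ using $W\p{s}W\inv$-type commutators. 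Carrying these out term by term yields the right-hand sides of \eqref{comm1}--\eqref{comm3}; for \eqref{comm4} one takes $A=L^pM$, $B=L^q M$, $\bar A=L^p\M$, $\bar B=L^q\M$ and computes $[L^pM,L^qM]=(q-p)L^{p+q-1}M$ together with the conjugate identity, whose difference reassembles into $(p-q)L^{p+q-1}(M-\M)$. The statement $[\dd_a,\dd_b]=0$ for $a,b\in\V_0$ is the subcase where all the relevant commutators of powers and logarithms of $L$ vanish.

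The verification I expect to be genuinely delicate—and the main obstacle—is checking the hypotheses of Lemmas~\ref{ddabLem1} and~\ref{ddabLem2} that assert compatibility of the vector field action with the $\pm$-projection, i.e.\ identities of the form $\dd_a(b_\pm)=(\dd_a b)_\pm$ when $b$ involves $M$ or $\M$. Unlike $L^{n/k}$, which lies in $\A_{--}$ with a clean decomposition, the operators $M$ and $\M$ are genuinely two-sided differential-difference operators, and $M-\M$ is an infinite sum whose convergence relies on the topology $x_n,t_n,\bar t_n\to0$ (see the remark after \exref{MMbarRem}); one must confirm that $\dd_a$ acts continuously in this topology and that it commutes with truncation. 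Concretely, I would use \prref{pMMbarf}, which already records how the EBTH flows conjugate $M$ and $\M$, to reduce the needed identities to statements about how the pieces $Ws\Lambda^{-k}W\inv$ and $\W s\Lambda^m\W\inv$ transform, and verify that each homogeneous-in-$(\t,\bt,\x)$ component behaves as required; the infinite sum then inherits the property by continuity.
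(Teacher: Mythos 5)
Your proposal follows essentially the same route as the paper: \eqref{commfl} via \leref{ddabLem2} applied to the splittings of \prref{pdcebth} with the hypotheses supplied by \prref{pMMbarf}, and \eqref{comm1}--\eqref{comm4} via \leref{ddabLem1} with $A=L^pM$, $\bar A=L^p\M$ and the conjugated commutators $W[\Ga,\cdot]W\inv$, $\W[\bar\Ga,\cdot]\W\inv$. The only quibble is a sign in your intermediate identity: since $[\Ga,\La^n]=-\tfrac nk\La^{n-k}$, one has $[L^pM,L^{\frac nk}]=-\tfrac nk L^{p+\frac{n-k}k}$, consistent with the right-hand side of \eqref{comm1} that you correctly state as the outcome.
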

\begin{proof}
To prove \eqref{commfl}, recall that the flows of the EBTH can be written in the form 
$\dc_{A,\bar A}$ with $(A,\bar A)\in\VC$ (see \prref{pdcebth}). 
The conditions of \leref{ddabLem2} hold for them and for $b\in\V_0+\V_1$, by \prref{pMMbarf}. Therefore, $[\dc_{A,\bar A},\dd_b]=0$.

The commutators \eqref{comm1}--\eqref{comm4} can be derived from \leref{ddabLem1}
using that $[L,M]	=	[L,\M]	=	1$ (see \leref{MMbarLem1}).
For $a\in\V_0$, the splitting $a=A-\bar A$ is given in \prref{pdcebth}, and for
$a=L^p(M-\M)$, we have $A=L^p M$, $\bar A=L^p \M$.
Let us check for example \eqref{comm1}. In this case, we have:
\begin{equation*}
A=L^p M, \quad \bar A=L^p \M, \quad B=L^{\frac{n}{k}} = W\La^n W\inv, \quad \bar B=0.
\end{equation*}
Then
\begin{align*}
[A,B]&=L^p [M,L^{\frac{n}{k}}] = L^p W [\Ga,\La^n] W\inv \\
&= -\frac{n}{k} L^p W \La^{n-k} W\inv = -\frac{n}{k} L^{p+\frac{n-k}{k}}.
\end{align*}
Similarly, to prove \eqref{comm3} consider
\begin{equation*}
A=L^p M, \quad \bar A=L^p \M, \quad B=L^q W\ps W\inv, \quad \bar B=L^q\W\ps\W\inv.
\end{equation*}
Then
\begin{align*}
[A,B]&=L^p [M,L^q] W\ps W\inv + L^{p+q} [M,W\ps W\inv] \\
&= -q L^{p+q-1} W\ps W\inv + L^{p+q} W [\Ga,\ps] W\inv \\
&= -q L^{p+q-1} W\ps W\inv - \frac1k L^{p+q} W \La^{-k} W\inv \\
\intertext{and}
[\bar A,\bar B]&=-q L^{p+q-1} \W\ps \W\inv + \frac1m L^{p+q} \W \La^m \W\inv.
\end{align*}
The other two equations, \eqref{comm2} and \eqref{comm4}, are proved in the same way.
\end{proof}

It follows from \eqref{comm4} that $\V_1$ is a Lie algebra isomorphic to the Lie algebra
$\Der\CC[z]$ of vector fields on the line, under the map $\dd_{L^p(M-\M)} \mapsto -z^p\p{z}$.
By \eqref{comm1}--\eqref{comm3}, $\V_0$ is a module of $\V_1$ under the adjoint action.

\begin{remark}\label{remtn}
Although $\dd_{L^{\frac{n}{k}}}$ acts as $\p{t_n}$ on the Lax operator $L$ and wave operators $W$, $\W$, it 
does not commute with $\dd_{L^p(M-\M)}$, while $[\p{t_n},\dd_{L^p(M-\M)}]=0$. The reason is that
$\dd_{L^{\frac{n}{k}}}$ differs from $\p{t_n}$ when acting on $\psi$, $\bar\psi$ and $M$, $\M$
 (cf.\ \cite[Remark 2.1.2]{ASvM95} or \cite[Remark 7.1.8]{Di03}).
\end{remark}

\section{Action of additional symmetries on the tau-function}\label{s4}

Here we determine how the additional symmetries of the EBTH from the previous section act on the tau-function,
thus getting an analog of the Adler--Shiota--van Moerbeke formula (cf.\ \cite{ASvM95, AvM95}).
We show that the symmetries $\dd_a$ ($a\in\V_0$) give rise to an action of the Heisenberg algebra and are essentially trivial,
while $\dd_a$ ($a\in\V_1$) correspond to a subalgebra of the Virasoro algebra. 

\subsection{Tau-function of the EBTH}\label{stauebth}

Due to \cite{Car06,CvdL13},
there exists a \emph{tau-function} $\tau=\tau(s,\t,\bt,\x)$ such that
\begin{equation}\label{EBTHtaudress}
\begin{split}
w	&=	\frac\psi\chi = \frac{\tau(s,\t-[z\inv],\bt,\x)}{\tau(s,\t,\bt,\x)}		=	\frac{G(z)\tau}{\tau} \,,\\
\bar w &= \frac{\bar\psi}{\bar\chi} 	=	\frac{\tau(s+1,\t,\bt+[z],\x)}{\tau(s,\t,\bt,\x)}	=	\frac{\bar{G}(z)\tau}{\tau} \,,
\end{split}
\end{equation}
where $[z]$ is as usual given by \eqref{zshift}. 
Explicitly,
\begin{equation}\label{GGbar}
			G(z)	=	\exp\Bigl( -\sum_{i=1}^\infty \frac{z^{- i}}{i} \p{t_i} \Bigr), \qquad
	\bar{G}(z)	=	\exp\Bigl( \p{s} + \sum_{i=1}^\infty \frac{z^i}{i}\p{\bar t_i} \Bigr).
\end{equation}

In this section, as before, we will continue to assume that $x_0=s$.
However, in contrast with the previous sections where we assumed $t_{nk}=\bar{t}_{nm}$ ($n\ge1$), we will now treat these variables as distinct (the reason will become clear later in \seref{sheis}).
Then for every fixed $\x$, $\tau$ is just the tau-function of the 2D Toda hierarchy corresponding to the pair of Lax operators $\mathcal{L}=W\La W\inv$, $\bar{\mathcal{L}}=\W\La\W\inv$, and \eqref{BTHLaxdress} gives the
reduction $L=\mathcal{L}^k = \bar{\mathcal{L}}^{-m}$ (see \cite{Tak10,UT84} and \reref{rebth2}).
Note that for $k=m=1$, if $\tau$ depends on the sums $t_{n}+\bar{t}_{n}$,
the operators $G(z)$ and $\bar G(z)$ coincide with the ones from
\seref{seth}.
As for the ETH, $\tau$ is determined only up to a multiplication by a function $f(s,\x)\in\F$,
i.e., such that $f(s+1,\x)=f(s,\x)$.

Recall from \reref{rebth2} that 
$L$, $W$ and $\W$ depend on $t_{nk}+\bar{t}_{nm}$ for $n\geq 1$, and we made the substitution \eqref{ttbar} in them.
Therefore, since we now treat the variables $t_{nk}$ and $\bar{t}_{nm}$ as distinct, we have to make the inverse substitution
\begin{equation*}
t_{nk} \mapsto t_{nk}+\bar{t}_{nm}, \quad \bar{t}_{nm} \mapsto t_{nk}+\bar{t}_{nm} \qquad (n\geq 1)
\end{equation*}
in $M-\M$ (otherwise, $M-\M$ would not satisfy \eqref{Mflows}). Thus, \eqref{EBTHMmMbar} is modified as follows:
\begin{equation}\label{MMbar2}
\begin{split}
				M - \M	=&\; \frac{1}{k}Ws\Lambda^{- k} W\inv + \frac{1}{m}\W s\Lambda^m \W\inv \\
&+ \sum_{n=1}^\infty \Bigl( \Bigl(\frac{1}{k} + \frac{1}{m}\Bigr)x_n + 2 nx_n\log L \Bigr) L^{n-1} \\
 &+ \sum_{n=1}^\infty \Bigl( \frac{n}{k}t_n L^{\frac{n-k}{k}} + \frac{n}{m}\bar{t}_nL^{\frac{n-m}{m}} \Bigr),
\end{split}
\end{equation}
Then all results from the previous section remain true.

\subsection{Vector fields on the manifold of tau-functions}\label{svectau}

For fixed positive integers $k,m$, denote by $\T$ the (infinite-dimensional) manifold of EBTH tau-functions, and recall that $\E$
is the manifold of EBTH Lax operators, wave operators and wave functions (see \seref{svfc}).
Also recall that for $a\in\V_0+\V_1$, we have the vector field $\dd_a$ on $\E$ given by \eqref{dda2} (see \seref{ssymebth}).
These vector fields induce vector fields on $\T$, and
the correspondence is given by the following \emph{Adler--Shiota--van Moerbeke formula}
(cf.\ \cite[Theorem 5.1]{vM91} and \cite[Lemma 3.1]{AvM95}).

\begin{proposition}\label{pvftau}
The vector fields\/ $\dd_a$ $(a\in\V_0+\V_1)$ on\/ $\E$ induce vector fields\/ $\dh_a$ on\/ $\T$ so that
\begin{equation}\label{vftau}
\begin{split}
\frac{\dd_a\psi}{\psi} &= \bigl( G(z)-1 \bigr) \frac{\dh_a\tau}{\tau} \,,\\
\frac{\dd_a\bar\psi}{\bar\psi} &=	\bigl( \bar{G}(z)-1 \bigr) \frac{\dh_a\tau}{\tau} \,.
\end{split}
\end{equation}
If\/ $\dd_a$ and\/ $\dd_b$ correspond to\/ $\dh_a$ and\/ $\dh_b$, respectively, then\/
$[\dd_a,\dd_b]$ corresponds to\/ $[\dh_a,\dh_b]$ up to adding a function from\/ $\F$.
\end{proposition}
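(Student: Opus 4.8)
The plan is to treat this as an Adler--Shiota--van Moerbeke-type statement and to deduce it from \thref{thm1}. The crucial observation I would use is that, since $a\in\V_0+\V_1$, the vector field $\dd_a$ commutes with all the flows of the EBTH; by \eqref{commfl} (with $x_0=s$, and now with the variables $t_{nk}$ and $\bar t_{nm}$ treated as distinct) this gives $[\dd_a,\p{s}]=[\dd_a,\p{t_i}]=[\dd_a,\p{\bar t_i}]=0$ for all $i\ge1$. Since $G(z)$ and $\bar G(z)$ in \eqref{GGbar} are exponentials of differential operators in $\p{s}$, $\p{t_i}$, $\p{\bar t_i}$ whose coefficients depend only on $z$, it follows that $\dd_a$ commutes with $G(z)$ and with $\bar G(z)$.

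For the first part, I would construct $\dh_a$ as follows. By \cite{Car06,CvdL13} every point of $\E$ carries a tau-function, unique up to multiplication by an element of $\F$. Because $\dd_a$ is a symmetry, it integrates to a local flow on $\E$; choosing tau-functions smoothly along this flow and differentiating at the base point yields a function $\dh_a\log\tau$, well defined modulo $\F$, and hence a vector field $\dh_a$ on $\T$. To obtain \eqref{vftau}, I would apply $\dd_a$ to the identities $\log w=(G(z)-1)\log\tau$ and $\log\bar w=(\bar G(z)-1)\log\tau$ that follow from \eqref{EBTHtaudress}; since $\dd_a\chi=\dd_a\bar\chi=0$ and $\dd_a$ commutes with $G(z)-1$ and $\bar G(z)-1$, the operator $\dd_a$ passes inside and \eqref{vftau} results. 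I would also record that $G(z)-1$ and $\bar G(z)-1$ both annihilate $\F$ (a function $f\in\F$ is independent of $\t$, $\bt$ and satisfies $f(s+1,\x)=f(s,\x)$), which is exactly why \eqref{vftau} determines the vector field on $\T$ corresponding to $\dd_a$ only up to adding a function from $\F$ to $\dh_a\log\tau$.

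For the commutator assertion I would apply the quotient rule twice. First, from $\dd_a\bigl(\dd_b\psi/\psi\bigr)=\dd_a\dd_b\psi/\psi-(\dd_a\psi/\psi)(\dd_b\psi/\psi)$ and the analogous identity with $a$, $b$ exchanged, the scalar quadratic terms cancel, giving $[\dd_a,\dd_b]\psi/\psi=\dd_a\bigl(\dd_b\psi/\psi\bigr)-\dd_b\bigl(\dd_a\psi/\psi\bigr)$. Substituting \eqref{vftau} and commuting $\dd_a$, $\dd_b$ past $G(z)-1$ as above rewrites this as $(G(z)-1)\bigl(\dd_a(\dh_b\tau/\tau)-\dd_b(\dh_a\tau/\tau)\bigr)$; since $\dh_a$ is by construction the vector field on $\T$ induced by $\dd_a$, I would replace $\dd_a$ by $\dh_a$ here (the discrepancy lies in $\F$, which is killed by $G(z)-1$ anyway), and a second application of the quotient rule on $\T$ collapses the bracket to $[\dh_a,\dh_b]\tau/\tau$. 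Running the same computation with $\bar\psi$ and $\bar G(z)$ gives the companion identity, so $[\dd_a,\dd_b]$ satisfies \eqref{vftau} with $\dh_a$ replaced by $[\dh_a,\dh_b]$; this is precisely the assertion, up to a function from $\F$.

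The two quotient-rule manipulations and the commutation $[\dd_a,G(z)]=[\dd_a,\bar G(z)]=0$ are routine. The part that needs care---and the only real obstacle---is the bookkeeping with the $\F$-ambiguity: that tau-functions can be chosen to depend smoothly on the flow parameter so that $\dh_a\log\tau$ is well defined modulo $\F$, that $\dd_a$ annihilates $\F$ (so that all the ``modulo $\F$'' manipulations are legitimate), and that $\dd_a$ and $\dh_a$ act the same way on $\dh_b\tau/\tau$. None of these is deep, but together they are what forces the ``up to adding a function from $\F$'' in the statement.
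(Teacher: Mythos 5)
Your proposal is correct and follows essentially the same route as the paper: take logarithmic derivatives of \eqref{EBTHtaudress}, use that $\dd_a$ commutes with $G(z)$ and $\bar G(z)$ because it commutes with the flows $\p{s}$, $\p{t_i}$, $\p{\bar t_i}$, and handle the bracket by the quotient-rule cancellation together with the fact that the common kernel of $G(z)-1$ and $\bar G(z)-1$ is exactly $\F$ (the paper records this equality of kernels, which is the direction actually needed to bound the ambiguity by $\F$; your write-up only states the containment $\F\subseteq\Ker$, but the reverse inclusion is routine).
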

\begin{proof}
The proof is almost identical to the proof of \cite[Theorem 5.1]{vM91}.
First, observe that by \eqref{dda2}, we have
\begin{equation}\label{psiw}
\frac{\dd_a\psi}{\psi} = \frac{\dd_a w}{w} \,, \qquad \frac{\dd_a\bar\psi}{\bar\psi} = \frac{\dd_a \bar w}{\bar w} \,,
\end{equation}
where the actions on $w$ and $\bar w$ are induced by the actions on $W$ and $\W$
(see \eqref{wavop}, \eqref{wavfn}).
Then \eqref{vftau} follows from \eqref{EBTHtaudress} by calculating the logarithmic derivatives and using that $\dd_a$
commutes with $G(z)$ and $\bar{G}(z)$, because $\dd_a$ commutes with the flows $\ps=\p{x_0}$, $\p{t_i}$ and $\p{\bar t_i}$.

After computing the commutator, we obtain:
\begin{align*}
\frac{[\dd_a,\dd_b]\psi}{\psi} &= \bigl( G(z)-1 \bigr) \frac{[\dh_a,\dh_b]\tau}{\tau} \,,\\
\frac{[\dd_a,\dd_b]\bar\psi}{\bar\psi} &=	\bigl( \bar{G}(z)-1 \bigr) \frac{[\dh_a,\dh_b]\tau}{\tau} \,.
\end{align*}
Then we note that $\Ker( G(z)-1 ) \cap \Ker( \bar{G}(z)-1 ) = \F$.
\end{proof}

Let us stress again that the vector fields $\dh_a$ on $\T$ are defined only up to adding a function $f\in\F$, because
\begin{equation*}
\bigl( G(z)-1 \bigr) \frac{(\dh_a+f)\tau}{\tau} = \bigl( G(z)-1 \bigr) \frac{\dh_a\tau}{\tau} \,,\qquad\quad f\in\F,
\end{equation*}
and there is a similar equation for $\bar{G}(z)$. Moreover, since $\tau$ and $f\tau$ are identified as tau-functions,
$\dh_a$ must have the property
\begin{equation*}
\bigl( G(z)-1 \bigr) \frac{\dh_a(f\tau)}{f\tau} = \bigl( G(z)-1 \bigr) \frac{\dh_a\tau}{\tau} \,,\qquad\quad f\in\F,
\end{equation*}
and the same with $\bar{G}(z)$. Therefore,
\begin{equation}\label{vftau2}
[\dh_a,f] \in\F \quad\;\text{for all}\quad a\in\V_0+\V_1, \; f\in\F.
\end{equation}

\subsection{Heisenberg symmetries of the EBTH}\label{sheis}

Now we will determine the vector fields $\dh_a$ on $\T$ that correspond to $\dd_a$ ($a\in\V_0$) via the Adler--Shiota--van Moerbeke formula \eqref{vftau}. Recall that $\V_0$ is linearly spanned by the following functions from $\E$ to $\A$:
\begin{equation*}
L^{\frac{n}{k}}, \;\; L^{\frac{n}{m}}, \;\; 2L^p\log L \qquad (n\in\Z,\; p\in\Z_{\ge0}).
\end{equation*}

\begin{lemma}\label{lheis1}
For every\/ $n\ge1$ and\/ $p\ge0$, we have
\begin{equation*}
\dh_{L^{\frac{n}{k}}} = \p{t_n}, \quad \dh_{L^{\frac{n}{m}}} = \p{\bar t_n}, \quad  
\dh_{2L^p\log L} = \p{x_p}, \quad \dh_{L^0} = x_0 =s, 
\end{equation*}
up to adding a function from\/ $\F$.
\end{lemma}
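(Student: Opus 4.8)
The plan is to verify the Adler--Shiota--van Moerbeke formula \eqref{vftau} directly for each of the generators of $\V_0$ by comparing the action of $\dd_a$ on $\psi$, $\bar\psi$ with the action of the claimed vector field on $\tau$ via \eqref{EBTHtaudress}. Recall that by \eqref{dda2}, the vector field $\dd_a$ acts on $\psi$ by $\dd_a\psi=-a_-\psi$ and on $\bar\psi$ by $\dd_a\bar\psi=a_+\bar\psi$, so by \eqref{psiw} we need to compute $\dd_a w/w$ and $\dd_a\bar w/\bar w$ and match them against $(G(z)-1)\dh_a\tau/\tau$ and $(\bar G(z)-1)\dh_a\tau/\tau$. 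Since \prref{pvftau} guarantees that a vector field $\dh_a$ with property \eqref{vftau} exists and is unique up to $\F$, it suffices to exhibit one candidate and check that it works; the natural candidates are exactly the coordinate vector fields, since by \eqref{vflows} the EBTH flows $\p{t_n}$, $\p{\bar t_n}$, $\p{x_p}$ on $\E$ are themselves of the form $\dd_b$ but with the \emph{modified} action on $\psi,\bar\psi$ given by \eqref{psiflow} (through $\tilde a$ rather than $-a_-$), which is precisely the discrepancy measured by \reref{remtn}.

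First I would treat $\dh_{2L^p\log L}=\p{x_p}$. Here $a=2L^p\log L$, and from \eqref{Pn} one has $-a_-=-(2L^p\log L)_- = L^p W\p s W\inv - L^p\ps - P_p$, so $\dd_a\psi=(L^p W\p s W\inv - L^p\ps - P_p)\psi$. On the other hand, $\p{x_p}\psi=(L^p\ps+P_p)\psi$ by \eqref{psiflow}, and $\ps\chi=\log(z)\chi$, $L^p\psi=z^{kp}\psi$; combining, $\dd_a\psi/\psi = \p{x_p}w/w - 2z^{kp}\log(z)$ after a short computation using $\psi=w\chi$. But on the tau-function side, since $\tau$ does not depend on $z$, the term $-2z^{kp}\log z$ must be accounted for — in fact it drops out because it lies in $\Ker(G(z)-1)$ (it is $z$-dependent but annihilated by the shift operators $\p{t_i}$), so one gets $\dd_a\psi/\psi=(G(z)-1)\p{x_p}\tau/\tau$, as wanted; the same check for $\bar\psi$ using the third line of \eqref{Pn} and $\bar G(z)$. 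The case $\dh_{L^{n/k}}=\p{t_n}$ (for $n\notin k\Z$ immediately, and for $n\in k\Z$ after splitting $L^n=\frac12L^n-(-\frac12)L^n$ and combining with $\p{t_{nk}}=A_n$) is analogous but easier, since $-(L^{n/k})_- = (L^{n/k})_+ - L^{n/k}$ and $L^{n/k}\psi=z^n\psi$, so $\dd_a\psi/\psi = \p{t_n}w/w - z^n$, and $-z^n\in\Ker(G(z)-1)$; symmetrically for $\bar\psi$ with $L^{n/k}\bar\psi$ contributing a $z$-dependent term killed by $\bar G(z)-1$. The case $\dh_{L^{n/m}}=\p{\bar t_n}$ is the mirror image. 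Finally $\dh_{L^0}=s$: here $a=1$, $a_-=0$, $a_+=1$, so $\dd_a\psi=0$ and $\dd_a\bar\psi=\bar\psi$; on the tau side $(G(z)-1)(\p{}/\p{})(s\cdot)/\,$ vanishes since $G(z)$ and $\bar G(z)$ act only in $\t,\bt,s$-shifts and $s$ is $z$-independent modulo the $\p s$ inside $\bar G(z)$, which reproduces exactly the factor needed for $\bar\psi$.

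The main obstacle I expect is bookkeeping around the $z$-dependent correction terms and the fact that $\dd_a$ differs from the coordinate flow precisely on $\psi,\bar\psi$: one must be careful that the difference $\dd_a\psi/\psi - \p{t_n}w/w$ (and its barred analog) really lies in $\Ker(G(z)-1)\cap\Ker(\bar G(z)-1)=\F$, which by \prref{pvftau} is the only ambiguity allowed. Concretely, the difference should be a Laurent polynomial in $z$ with $s,\x$-dependent coefficients (hence in $\F$ after noting $G(z)-1$ and $\bar G(z)-1$ annihilate monomials $z^j$), and verifying this amounts to the identities $(L^{n/k})_+\psi$ versus $-(L^{n/k})_-\psi$ together with $L^{n/k}\psi=z^n\psi$, plus the corresponding computation with $\log L$ via \eqref{logL} and \eqref{Pn}. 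A second, minor point is that the switch back to treating $t_{nk}$ and $\bar t_{nm}$ as distinct (made at the start of \seref{s4}) must be respected: the flows $\p{t_{nk}}$ and $\p{\bar t_{nm}}$ are now genuinely different, and one checks separately that $\dd_{L^{nk/k}}=\frac12(\p{t_{nk}}+\p{\bar t_{nm}})\pm\text{(something in }\Ker)$ collapses correctly — but this is exactly the content of the splitting $L^n=\frac12L^n-(-\frac12)L^n$ used in \prref{pdcebth} and causes no new difficulty. Having matched all generators of $\V_0$, linearity of \eqref{vftau} in $a$ finishes the proof.
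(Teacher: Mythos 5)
Your overall strategy --- verifying the Adler--Shiota--van Moerbeke formula \eqref{vftau} generator by generator --- is the same as the paper's, but the way you handle the residual terms contains a genuine error. In the case $a=L^{\frac{n}{k}}$ with $n\notin k\Z$, you arrive at $\dd_a\psi/\psi=\p{t_n}w/w-z^n$ and then dismiss the $-z^n$ on the grounds that it lies in $\Ker(G(z)-1)$. That dismissal is invalid: the identity to be proved is $\dd_a\psi/\psi=(G(z)-1)\,\dh_a\tau/\tau$, and $(G(z)-1)$ applied to any function of $\t,\bt,\x,s$ produces a series in $z^{-1}$ with no constant term, so it can never output $-z^n$ with $n\ge1$. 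A genuinely surviving $-z^n$ would therefore make \eqref{vftau} fail for \emph{every} choice of $\dh_a$; membership in the kernel makes matters worse, not better. What actually happens is that the residual cancels exactly: $\dd_a\psi=\p{t_n}\psi-L^{\frac{n}{k}}\psi=\p{t_n}\psi-z^n\psi$, while $\p{t_n}\psi/\psi=\p{t_n}w/w+\p{t_n}\chi/\chi=\p{t_n}w/w+z^n$, so $\dd_a\psi/\psi=\p{t_n}w/w$ with nothing left over --- you omitted the contribution of $\p{t_n}$ acting on the exponential prefactor $\chi$. The same problem occurs in your $x_p$ computation (which also has a sign error in $-(2L^p\log L)_-$ as read off from \eqref{Pn}): done correctly, the $z^{kp}\log z$ contributions from $L^pW\ps W\inv\psi$ and from $\p{x_p}\chi/\chi$ cancel and one gets $\dd_a\psi/\psi=\p{x_p}w/w$ exactly, not a surviving $-2z^{kp}\log z$ ``killed by the kernel.''

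The paper's proof sidesteps this bookkeeping entirely: by \eqref{psiw} one has $\dd_a\psi/\psi=\dd_a w/w$, and by \eqref{EBTHdresflows} the action of $\dd_{L^{\frac{n}{k}}}$ on $W$ (hence on $w$) literally coincides with that of $\p{t_n}$; then $\p{t_n}w/w=\p{t_n}\log\bigl(G(z)\tau/\tau\bigr)=(G(z)-1)\p{t_n}\tau/\tau$, and no residual terms ever appear. Your route can be repaired by restoring the $\chi$- and $\bar\chi$-contributions so that the corrections cancel identically, but as written the ``lies in $\Ker(G(z)-1)$, hence drops out'' step is where the argument breaks.
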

\begin{proof}
Using 
 \eqref{EBTHdresflows}, \eqref{EBTHtaudress}, \eqref{psiw},
we find
\begin{equation*}
\frac{\dd_{L^{\frac{n}{k}}}\psi}{\psi} = \frac{\dd_{L^{\frac{n}{k}}} w}{w} 
= \frac{\p{t_n} w}{w}  = \p{t_n} \log\Bigl( \frac{G(z)\tau}{\tau} \Bigr)
= \bigl( G(z)-1 \bigr) \frac{\p{t_n}\tau}{\tau} \,,
\end{equation*}
and similarly for the action on $\bar\psi$. This means that $\dh_{L^{\frac{n}{k}}} = \p{t_n}$.
The next two equations are proved in the same way.
Finally,
\begin{equation*}
\frac{\dd_{L^0}\psi}{\psi} = 0 = \bigl( G(z)-1 \bigr) \frac{s\tau}{\tau} \,, \qquad
\frac{\dd_{L^0}\bar\psi}{\bar\psi} = 1 = \bigl( \bar{G}(z)-1 \bigr) \frac{s\tau}{\tau} \,,
\end{equation*}
since $(L^0)_-=0$ and $(L^0)_+=1$.
\end{proof}

The vector fields $\p{t_n}$, $\p{\bar t_n}$ and $\p{x_p}$ from the above lemma are just infinitesimal translations of the corresponding variables when acting on $\tau\in\T$.
On the other hand, if we replace $\tau$ with $\exp(c\dh_{L^0})\tau = e^{cs}\tau$
for $c\in\CC$, the wave function $w$ remains unchanged while $\bar w$ becomes
$e^c\bar w$. This amounts to rescaling $\W\mapsto e^c\W$, which gives the same Lax operator 
$L$ according to \eqref{BTHLaxdress}.

\begin{lemma}\label{lheis2}
For every\/ $n\ge1$, we have
\begin{equation*}
\dh_{L^{\frac{-n}{k}}} = n t_n, \qquad
\dh_{L^{\frac{-n}{m}}} = n\bar t_n,
\end{equation*}
up to adding a function from\/ $\F$.
\end{lemma}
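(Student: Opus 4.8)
The plan is to compute the action of $\dd_{L^{-n/k}}$ on the wave function $\psi$ directly from the definition \eqref{dda2} and match it against the Adler--Shiota--van Moerbeke formula \eqref{vftau}. Recall that $L^{-n/k} = W\La^{-n}W\inv \in \A_{--}$, so its ``$-$'' part is essentially all of it modulo the finitely many nonnegative powers of $\La$. The key is that $[L,M] = 1$ (\leref{MMbarLem1}) gives $[L^{-n/k}, M] = \p{z^k} z^{-n} = -n z^{-n-k} = -\frac{n}{k}L^{-(n+k)/k}\cdot(\text{something})$, but more to the point, the relation $M\psi = \p{z^k}\psi$ lets us identify how $z$-derivatives interact with multiplication by powers of $z^k$. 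Actually the cleanest route: from $L\psi = z^k\psi$ we get $L^{-n/k}\psi = z^{-n}\psi$ (using \eqref{EBTHfracL} and $\psi = W\chi$, since $\La^{-n}\chi = z^{-n}\chi$). Therefore
\begin{equation*}
\dd_{L^{-n/k}}\psi = -(L^{-n/k})_-\psi = -\bigl(L^{-n/k}\psi\bigr) + (L^{-n/k})_+\psi = -z^{-n}\psi + (L^{-n/k})_+\psi.
\end{equation*}

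First I would argue that $(L^{-n/k})_+$ contributes only to the ``non-negative'' part and, when we pass to the tau-function via $\psi/\chi = w = G(z)\tau/\tau$, its contribution lands in the image of $G(z)-1$ applied to something. More precisely, dividing by $\psi$,
\begin{equation*}
\frac{\dd_{L^{-n/k}}\psi}{\psi} = -z^{-n} + \frac{(L^{-n/k})_+\psi}{\psi}.
\end{equation*}
The term $-z^{-n}$ must be matched: I claim $(G(z)-1)\frac{n t_n \tau}{\tau}$ reproduces it. Indeed $G(z) = \exp(-\sum_i \frac{z^{-i}}{i}\p{t_i})$, so $(G(z)-1)(nt_n) = -z^{-n}\cdot n \cdot \frac1n + (\text{higher order in }z\inv\text{ that annihilates }t_n\text{ or is }0)$; one checks $\p{t_i}(nt_n) = n\delta_{in}$, hence $G(z)(nt_n) - nt_n = -z^{-n}$ exactly (all higher terms in the exponential expansion vanish since second and higher $t$-derivatives of the linear function $nt_n$ are zero). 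This is the heart of the matching. The remaining piece $(L^{-n/k})_+\psi/\psi$ must be shown to vanish or to be absorbable; here I would use that $(L^{-n/k})_+ = (W\La^{-n}W\inv)_+$ has only finitely many terms with nonnegative powers of $\La$, and when acting on $\psi = w\chi$ with $\La^j\chi = z^j\chi$, it produces a Laurent polynomial times $\psi$ whose structure I need to control. Actually the correct and simplest observation is: $\dd_{L^{-n/k}}$ acting on $w$ is $\dd_{L^{-n/k}}w/w$, and since $\dd_{L^{-n/k}}W = -(L^{-n/k})_-W$, the induced action on the symbol $w$ is by $-(L^{-n/k})_-$ as a pseudo-difference operator acting on the symbol — and because $L^{-n/k}$ has symbol $z^{-n}w/w \cdot(\ldots)$... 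I would instead mirror the proof of \leref{lheis1}: compare $\dd_{L^{-n/k}}w/w$ with $(G(z)-1)\frac{n t_n\tau}{\tau}$ by showing both equal $-z^{-n}$ plus terms in $\A_+$ acting trivially, using that $w = 1 + O(z\inv)$ so that the ``$+$'' part of any operator applied to $w$ and then divided by $w$ gives something that, expanded, has no negative powers — but $(G(z)-1)$ applied to anything produces only negative powers of $z$, forcing the match.

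The key steps in order: (1) establish $L^{-n/k}\psi = z^{-n}\psi$ and $L^{-n/m}\bar\psi$... wait, we need $L^{-n/k}\bar\psi$ as well — here $\bar\psi = \W\bar\chi$ and $L^{-n/k} = W\La^{-n}W\inv$ does not act diagonally on $\bar\psi$, so I must instead use the \emph{other} relation or note that $\dd_{L^{-n/k}}\bar\psi = (L^{-n/k})_+\bar\psi$ and separately check $(\bar G(z)-1)\frac{nt_n\tau}{\tau} = (L^{-n/k})_+\bar\psi/\bar\psi$; (2) compute $(G(z)-1)(nt_n) = -z^{-n}$ and $(\bar G(z)-1)(nt_n) = 0$ (since $\bar G(z)$ involves only $\p{s}$ and $\p{\bar t_i}$, which kill $nt_n$) — wait, that would force $(L^{-n/k})_+\bar\psi/\bar\psi = 0$; (3) verify the ``$+$''-part contributions vanish using the structure of the wave operators and the fact that $(G(z)-1)$, $(\bar G(z)-1)$ have image in negative (resp. mixed) powers of $z$; (4) invoke \prref{pvftau} and $\Ker(G(z)-1)\cap\Ker(\bar G(z)-1) = \F$ to conclude $\dh_{L^{-n/k}} = nt_n$ up to $\F$, and symmetrically $\dh_{L^{-n/m}} = n\bar t_n$. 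The main obstacle I anticipate is step (3) — controlling the $(L^{-n/k})_+\psi/\psi$ and $(L^{-n/k})_+\bar\psi/\bar\psi$ terms and showing they do not spoil the matching; the resolution should be that since $(L^{-n/k})_+ \in \A_+$ and $w \in 1 + \A_-$, the quantity $(L^{-n/k})_+\psi/\psi$ expanded in $z$ has a specific polynomial-in-$z$ leading behavior that must separately be reconciled, but in fact because $L^{-n/k}\psi = z^{-n}\psi$ is exact, we have $(L^{-n/k})_+\psi = z^{-n}\psi + (L^{-n/k})_-\psi$, and $(L^{-n/k})_-\psi/\psi$ has only negative powers of $z$ while $z^{-n}$ is already negative, so $(L^{-n/k})_+\psi/\psi$ consists purely of nonpositive powers — and matching orders of $z$ on both sides of \eqref{vftau} then pins everything down. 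I would also double-check the $s$-dependence and the role of $\bar G(z)$'s $\p{s}$ term, which is why the answer is $nt_n$ with no $s$-correction.
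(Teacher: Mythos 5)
Your overall strategy is the paper's: compute $\dd_{L^{-n/k}}\psi$ and $\dd_{L^{-n/k}}\bar\psi$ directly and match them against $(G(z)-1)(nt_n)=-z^{-n}$ and $(\bar G(z)-1)(nt_n)=0$, both of which you compute correctly. But you miss the one observation that makes the lemma immediate, and your substitute argument for it is flawed. Since $W\in 1+\A_-$, the operator $L^{-n/k}=W\La^{-n}W\inv$ for $n\ge1$ involves only the powers $\La^{-n},\La^{-n-1},\dots$, all strictly negative; hence $L^{-n/k}\in\A_-$, so $(L^{-n/k})_-=L^{-n/k}$ and $(L^{-n/k})_+=0$ \emph{identically}. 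Consequently $\dd_{L^{-n/k}}\psi=-L^{-n/k}\psi=-z^{-n}\psi$ and $\dd_{L^{-n/k}}\bar\psi=(L^{-n/k})_+\bar\psi=0$, and the matching is exact with nothing left over. Your opening remark that the ``$-$'' part is the whole operator ``modulo the finitely many nonnegative powers of $\La$'' is the wrong picture: there are no nonnegative powers at all.

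Because you do not see this, your step (3) --- controlling $(L^{-n/k})_+\psi/\psi$ and $(L^{-n/k})_+\bar\psi/\bar\psi$ --- becomes the ``main obstacle,'' and the mechanism you propose to resolve it does not work: it is not true that the $+$ part of a difference operator, applied to $\psi=w\chi$ and divided by $\psi$, produces only nonnegative powers of $z$. For example $a_1(s)\La$ gives $a_1(s)\,z\,w(s+1,z)/w(s,z)=a_1 z + a_1\bigl(w_1(s+1)-w_1(s)\bigr)+O(z^{-1})$, which has negative powers. There is also a sign slip in that passage ($(L^{-n/k})_+\psi=z^{-n}\psi-(L^{-n/k})_-\psi$, not $+$). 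So as written the proof has a genuine gap at its decisive step, though the gap closes in one line once you record that $(L^{-n/k})_+=0$; everything else in your computation then goes through and agrees with the paper's proof.
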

\begin{proof}
By definition, $L^{\frac{-n}{k}} = W\Lambda^{-n} W\inv\in\A_-$. Hence,
\begin{equation*}
\dd_{L^{\frac{-n}{k}}}\psi = -(L^{\frac{-n}{k}})_-\psi = -W\Lambda^{-n} W\inv\psi 
= -z^{-n} \psi
\end{equation*}
and
\begin{equation*}
\dd_{L^{\frac{-n}{k}}}\bar\psi = (L^{\frac{-n}{k}})_+\bar\psi = 0.
\end{equation*}
This matches with
\begin{equation*}
\bigl( G(z)-1 \bigr) n t_n = -z^{-n}, \qquad
\bigl( \bar{G}(z)-1 \bigr) n t_n = 0.
\end{equation*}
The proof for $L^{\frac{-n}{m}}$ is similar.
\end{proof}

The symmetries from \leref{lheis2} give rise to the transformations
\begin{align*}
\exp\Bigl( \sum_{n=1}^\infty \Bigl( c_n\dd_{L^{\frac{-n}{k}}} 
+ \bar c_n\dd_{L^{\frac{-n}{m}}}\Bigr)\Bigr) \psi 
&= \exp\Bigl( -\sum_{n=1}^\infty c_nz^{-n} \Bigr) \psi, \\
\exp\Bigl( \sum_{n=1}^\infty \Bigl( c_n\dd_{L^{\frac{-n}{k}}} 
+ \bar c_n\dd_{L^{\frac{-n}{m}}}\Bigr)\Bigr) \bar\psi 
&= \exp\Bigl( \sum_{n=1}^\infty \bar c_nz^{n} \Bigr) \bar\psi, 
\end{align*}
which correspond to
$W\mapsto WC$ and $\W\mapsto \W\bar C$
where
\begin{equation*}
C = \exp\Bigl( -\sum_{n=1}^\infty c_n\La^{-n} \Bigr), \quad
\bar C = \exp\Bigl( \sum_{n=1}^\infty \bar c_n\La^{n} \Bigr), \qquad
c_n,\bar c_n\in\CC.
\end{equation*}
These transformations of the dressing operators $W$, $\W$ preserve the
Lax operator $L$ from \eqref{BTHLaxdress}.

The above discussion shows that the symmetries $\dh_a$ $(a\in\V_0)$ 
are essentially trivial. Finally, observe that the equality 
$L^{\frac{nk}{k}}=L^{\frac{nm}{m}}$ for $n\in\Z_{\ge1}$ does not imply that
$(\p{t_{nk}}-\p{\bar t_{nm}})\tau = 0$ for $\tau\in\T$. It only implies that
\begin{equation}\label{heistau}
(\p{t_{nk}}-\p{\bar t_{nm}})\tau = f_n\tau \quad\;\text{for some}\quad f_n\in\F.
\end{equation}
In particular, when we replace $\tau$ with
\begin{equation*}
\exp\Bigl( \sum_{n=1}^\infty \Bigl( c_n\dh_{L^{\frac{-n}{k}}} 
+ \bar c_n\dh_{L^{\frac{-n}{m}}}\Bigr)\Bigr) \tau 
= \exp\Bigl( \sum_{n=1}^\infty \Bigl( n c_n t_n
+ n \bar c_n \bar t_n \Bigr)\Bigr) \tau,
\end{equation*}
this will add $nk c_{nk} - nm \bar c_{nm}$ to each $f_n$.

\subsection{Virasoro symmetries of the EBTH}\label{svir}

In this subsection, we determine the vector fields $\dh_a$ on $\T$ that correspond to $\dd_a$ ($a\in\V_1$) via the
Adler--Shiota--van Moerbeke formula \eqref{vftau}. In other words, we want to find $\dh_{L^p(M-\M)}$ for $p\ge0$,
where $M-\M$ is given by \eqref{MMbar2}.
First, we see how they commute with the 
Heisenberg symmetries from the previous subsection.

\begin{lemma}\label{lvir1}
For every\/ $n\ge1$ and\/ $p,q\ge0$, we have$:$
\begin{align*}
\bigl[ \p{t_n} , \dh_{L^p(M-\M)} \bigr] 
&= \begin{cases}
\frac{n}k \p{t_{(p-1)k+n}},  &p\ge1 \;\;\text{or}\;\; p=0, \; n>k, \\
\frac{n}k (k-n) t_{k-n},  &p=0, \; n<k, \\
x_0,  &p=0, \;n=k,
\end{cases} \\[8pt]
\bigl[ \dh_{L^p(M-\M)}, t_n \bigr] 
&= \begin{cases}
\frac{1}k \p{t_{(p-1)k-n}},  &n<(p-1)k, \\
\frac{1}k (n-(p-1)k) t_{n-(p-1)k},  &n>(p-1)k, \\
\frac{1}k x_0,  &n=(p-1)k,
\end{cases} \\[8pt]
\bigl[ \p{\bar t_n} , \dh_{L^p(M-\M)} \bigr] 
&= \begin{cases}
\frac{n}m \p{\bar t_{(p-1)m+n}},  &p\ge1 \;\;\text{or}\;\; p=0, \; n>m, \\
\frac{n}m (m-n) \bar t_{m-n},  &p=0, \; n<m, \\
x_0,  &p=0, \;n=m,
\end{cases} \\[8pt]
\bigl[ \dh_{L^p(M-\M)}, \bar t_n \bigr] 
&= \begin{cases}
\frac{1}m \p{\bar t_{(p-1)m-n}},  &n<(p-1)m, \\
\frac{1}m (n-(p-1)m) \bar t_{n-(p-1)m},  &n>(p-1)m, \\
\frac{1}m x_0,  &n=(p-1)m,
\end{cases} \\[8pt]
\bigl[ \p{x_q} , \dh_{L^p(M-\M)} \bigr] 
&= \begin{cases}
q \p{x_{p+q-1}} + \frac{1}k \p{t_{(p+q-1)k}} + \frac{1}m \p{\bar t_{(p+q-1)m}},  &p+q>1, \\
q \p{x_0} + \bigl( \frac1k+\frac1m\bigr) x_0,  &p+q=1, \\
t_k + \bar t_m,  &p=q=0,
\end{cases} \\[8pt]
\bigl[ \dh_{L^p(M-\M)}, x_q \bigr] &= 0,
\end{align*}
up to adding a function from\/ $\F$.
\end{lemma}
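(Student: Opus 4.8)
The plan is to compute each commutator by combining the Adler--Shiota--van Moerbeke correspondence of \prref{pvftau} with the commutation relations among vector fields on $\E$ already established in \thref{thm1}. The key observation is that \eqref{commfl} and \eqref{comm1}--\eqref{comm3} tell us the brackets $[\dd_y,\dd_{L^p(M-\M)}]$ for every flow variable $y$, and each right-hand side is again a $\dd$ of some element of $\V_0+\V_1$. By the last sentence of \prref{pvftau}, applying the ASvM map to such an identity yields the corresponding identity for the $\dh$'s up to a function in $\F$. Moreover, by \leref{lheis1} and \leref{lheis2}, each generator $\dd_y$ (for $y\in\{t_n,\bar t_n,x_q\}$) maps under ASvM to the \emph{operator} $\dh_{(\cdot)}$, which is either a partial derivative $\p{t_n},\p{\bar t_n},\p{x_q}$, or multiplication by $nt_n$, $n\bar t_n$, $x_0$. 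So first I would rewrite each relation \eqref{comm1}--\eqref{comm3} with $\dd$ replaced by $\dh$ modulo $\F$, then translate the $\dh$'s of elements of $\V_0$ into the explicit operators via Lemmas \ref{lheis1} and \ref{lheis2}, taking care of the case distinctions that arise because $L^{p+\frac{n-k}{k}}$ is a nonnegative integer power, a negative power, or $L^0$ depending on the sign of $(p-1)k+n$.

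More concretely: for the first relation, \eqref{comm1} with $n\to n$, $k\to k$ reads $[\dd_{L^p(M-\M)},\dd_{L^{n/k}}] = -\tfrac{n}{k}\dd_{L^{p+(n-k)/k}}$. The exponent $p+\tfrac{n-k}{k} = \tfrac{(p-1)k+n}{k}$ is $>0$ precisely when $p\ge1$ or ($p=0$, $n>k$), equals $0$ when $p=0,n=k$, and is $<0$ when $p=0,n<k$. Using $\dh_{L^{m/k}}=\p{t_m}$ for $m\ge1$ (\leref{lheis1}), $\dh_{L^0}=x_0$ (\leref{lheis1}), and $\dh_{L^{-m/k}}=m t_m$ (\leref{lheis2}), together with the sign flip $[\p{t_n},\dh_{L^p(M-\M)}] = -[\dd_{L^p(M-\M)},\dd_{L^{n/k}}]$ applied under ASvM, gives exactly the three cases of the first display. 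The $[\dh_{L^p(M-\M)},t_n]$ line comes the same way from the relation $[\dd_{L^p(M-\M)},\dd_{L^{-n/k}}] = \tfrac{n}{k}\dd_{L^{p-1-n/k}}$ (which is \eqref{comm1} with $n\to -n$), dividing by $n$ since $\dh_{L^{-n/k}}=nt_n$: the exponent $p-1-\tfrac{n}{k}=\tfrac{(p-1)k-n}{k}$ is positive/zero/negative according to $n<(p-1)k$, $n=(p-1)k$, $n>(p-1)k$, and in the first case we get $\p{t_{(p-1)k-n}}$ scaled by $\tfrac1k$, in the last case $(n-(p-1)k)t_{n-(p-1)k}$ scaled by $\tfrac1k$, in the equality case $\tfrac1k x_0$. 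The $\bar t_n$ lines are identical with $k$ replaced by $m$ and $W$ data by $\W$ data, coming from \eqref{comm2}.

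The $x_q$ line is the subtle one and is where I expect the main obstacle: it must come from \eqref{comm3}, $[\dd_{L^p(M-\M)},\dd_{2L^q\log L}] = -q\,\dd_{2L^{p+q-1}\log L} - \dd_{L^{p+q}(\frac1k L^{-k/k}+\frac1m L^{-m/m})}$. Translating under ASvM: the first term gives $-q\,\dh_{2L^{p+q-1}\log L}$, which by \leref{lheis1} equals $-q\,\p{x_{p+q-1}}$ when $p+q-1\ge1$, i.e.\ $p+q\ge2$, and equals $-q\,\p{x_0}$ when $p+q-1=0$; the case $p=q=0$ makes the $-q(\cdots)$ term vanish. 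The second term splits as $-\tfrac1k\dh_{L^{p+q-1}} - \tfrac1m\dh_{L^{p+q-m/m\cdot\ \mathrm{shift}}}$; more carefully $L^{p+q}L^{-k/k} = L^{(p+q-1)k/k}$ as a $\V_0$-element attached to the $\La$-side, so its $\dh$ is $\p{t_{(p+q-1)k}}$ when $p+q-1\ge1$, is $x_0$ when $p+q-1=0$, and is $(1-(p+q))t_{\ldots}$-type when $p+q-1<0$ — but $p,q\ge0$ forces $p+q-1\ge-1$, so the only subcase below $1$ is $p+q-1=0$, giving $x_0$. Similarly $L^{p+q}L^{-m/m}$ contributes $\p{\bar t_{(p+q-1)m}}$ or $x_0$. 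Assembling: for $p+q>1$ we get $q\p{x_{p+q-1}}+\tfrac1k\p{t_{(p+q-1)k}}+\tfrac1m\p{\bar t_{(p+q-1)m}}$ after the overall sign flip from moving $\p{x_q}$ to the front; for $p+q=1$ we get $q\p{x_0}+(\tfrac1k+\tfrac1m)x_0$; for $p=q=0$ the $\log$-term drops and we are left with $\tfrac1k\cdot k t_k+\tfrac1m\cdot m\bar t_m = t_k+\bar t_m$ (here using $\dh_{L^{-k/k}}=kt_k$ from \leref{lheis2}, since the exponent $p+q-1=-1<0$). Finally $[\dh_{L^p(M-\M)},x_q]=0$ follows because $\dd_{2L^q\log L}=\p{x_q}$ on $\E$ and $\dd_{L^p(M-\M)}$ commutes with all EBTH flows by \eqref{commfl}, so the induced brackets on $\T$ vanish up to $\F$. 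Throughout, the "up to a function from $\F$" clause absorbs the ambiguity in $\dh_a$ noted after \prref{pvftau}, so no further bookkeeping of constants of integration is needed. The bookkeeping of which of the three regimes applies in each formula — driven entirely by the sign of the shifted exponent — is the only genuinely delicate point, and it is purely combinatorial once the ASvM dictionary is in place.
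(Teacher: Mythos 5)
Your overall strategy is exactly the paper's: the authors' proof of this lemma is essentially a one-liner saying that everything follows from the commutation relations \eqref{comm1}--\eqref{comm3}, \prref{pvftau}, and Lemmas \ref{lheis1} and \ref{lheis2}, and your detailed unwinding of the five bracket families involving $\p{t_n}$, $t_n$, $\p{\bar t_n}$, $\bar t_n$, and $\p{x_q}$ --- the sign bookkeeping, the identification $L^{p+q}L^{\frac{-k}{k}}=L^{\frac{(p+q-1)k}{k}}$, and the trichotomy driven by the sign of the shifted exponent --- checks out in every case.

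The one genuine flaw is your justification of the final identity $[\dh_{L^p(M-\M)},x_q]=0$. There $x_q$ denotes the multiplication operator, not the derivation $\p{x_q}$, so \eqref{commfl} (which concerns $[\p{x_q},\dd_a]$) and the identification of $\dd_{2L^q\log L}$ with the $x_q$-flow are beside the point; moreover, for $q\ge1$ there is no element of $\V_0$ whose $\dh$ is multiplication by $x_q$, so this bracket cannot be extracted from the $\dd$-commutators via the ASvM dictionary at all. The paper's argument is different and elementary: for $q\ge1$ the function $x_q$ is independent of $s$, hence lies in $\F$, and \eqref{vftau2} gives $[\dh_a,x_q]\in\F$ directly, i.e.\ the bracket vanishes modulo $\F$; for $q=0$ one uses $x_0=\dh_{L^0}$ from Lemma \ref{lheis1} together with the $n=0$ case of \eqref{comm1}, whose coefficient $-n/k$ vanishes. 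With that one step repaired, your proof is complete and coincides with the paper's.
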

\begin{proof}
The equation $[\dh_{L^p(M-\M)},x_q]\in\F$ for $q\ge1$ is a consequence of \eqref{vftau2}.
The remaining ones follow immediately from the commutation relations \eqref{comm1}--\eqref{comm3}, \prref{pvftau}, 
and Lemmas \ref{lheis1}, \ref{lheis2}.
\end{proof}

We introduce linear operators $D_p$ ($p\ge0$) on $\T$ that satisfy the same
commutation relations with the Heisenberg symmetries as $\dh_{L^p(M-\M)}$ in the above lemma. Let
\begin{align*}
D_0 &= 
\frac1k \sum_{n=1}^\infty \bigl( x_{n+1} \p{t_{nk}} + (n+k) t_{n+k} \p{t_n} \bigr)
+ \frac1{2k} \sum_{n=1}^{k-1} n t_n (k-n) t_{k-n} \\
&+ \frac1m \sum_{n=1}^\infty \bigl( x_{n+1} \p{\bar t_{nm}} 
+ (n+m) \bar t_{n+m} \p{\bar t_n} \bigr) 
+ \frac1{2m} \sum_{n=1}^{m-1} n \bar t_n (m-n) \bar t_{m-n} \\
&+ \sum_{n=1}^\infty n x_n \p{x_{n-1}}
+ \Bigl( \frac1k+\frac1m\Bigr) x_0 x_1
+ x_0 (t_k + \bar t_m), \\[8pt]
D_1 &= 
\frac1k \sum_{n=1}^\infty \bigl( x_n \p{t_{nk}} + n t_n \p{t_n} \bigr)
+ \frac1m \sum_{n=1}^\infty \bigl( x_n \p{\bar t_{nm}} 
+ n \bar t_n \p{\bar t_n} \bigr) \\
&+ \sum_{n=1}^\infty n x_n \p{x_n}
+ \Bigl( \frac1k+\frac1m\Bigr) \frac{x_0^2}2 \,, \\
\intertext{and for $p\ge1$,}
D_{p+1} &= 
\sum_{n=1}^\infty n x_n \p{x_{p+n}}
+ \frac1k \sum_{n=0}^\infty \bigl( x_n \p{t_{(p+n)k}} + n t_n \p{t_{pk+n}} \bigr)
+ \frac1{2k} \sum_{n=1}^{pk-1} \p{t_n} \p{t_{pk-n}} \\
&+ \frac1m \sum_{n=0}^\infty \bigl( x_n \p{\bar t_{(p+n)m}} 
+ n \bar t_n \p{\bar t_{pm+n}} \bigr) 
+ \frac1{2m} \sum_{n=1}^{pm-1} \p{\bar t_n} \p{\bar t_{pm-n}} .
\end{align*}

\begin{lemma}\label{lvir2}
All equations from \leref{lvir1} hold if we replace\/ $\dh_{L^p(M-\M)}$ with\/ $D_p$
for\/ $p\ge0$.
\end{lemma}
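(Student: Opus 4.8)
The proof is a direct, if lengthy, verification. Each $D_p$ is a differential operator of order at most two in the variables $t_n,\bar t_n,x_n$ with polynomial coefficients, so its commutator with a Heisenberg generator — either a multiplication operator $t_n,\bar t_n,x_q$ or a derivation $\p{t_n},\p{\bar t_n},\p{x_q}$ — is again a differential operator of order $\le1$ that can be evaluated monomial by monomial. The plan is to run through the six families of brackets in \leref{lvir1}, isolate in each $D_p$ the finitely many monomials in which the chosen generator (or its conjugate derivative) occurs, commute, collect, and compare with the right-hand side of \leref{lvir1} — keeping in mind that equality there is only asserted up to an element of $\F$, i.e.\ up to a function of $\x$ alone (and, more generally, an $s$-periodic function).

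Two observations cut the work down. First, the $t$-variables enter every $D_p$ exactly as the $\bar t$-variables do after the exchange $k\leftrightarrow m$, so the brackets involving $\p{\bar t_n}$ and $\bar t_n$ follow from those involving $\p{t_n}$ and $t_n$, and it suffices to treat $y\in\{t_n,\p{t_n},x_q,\p{x_q}\}$. Second, the cases $p=0$ and $p=1$ must be handled on their own, since $D_0$ and $D_1$ have their own shapes, whereas for $p\ge2$ one uses the single formula for $D_{p+1}$ with $p$ replaced by $p-1$. The piecewise branches in \leref{lvir1} then arise from exactly two sources: (i) whether the index produced by a shift lands inside the range of a summation in $D_p$ — when it does not, that term of $D_p$ is simply absent, and one is left with a lower-order (multiplication) contribution; and (ii) the quadratic pieces $\tfrac1{2k}\sum_{n=1}^{k-1}nt_n(k-n)t_{k-n}$ (and its $\bar t$-analogue) inside $D_0$, together with $\tfrac1{2k}\sum_{n=1}^{pk-1}\p{t_n}\p{t_{pk-n}}$ (and its $\bar t$-analogue) inside $D_{p+1}$, which produce respectively the branch $\tfrac nk(k-n)t_{k-n}$ of $[\p{t_n},D_0]$ and the branch $\tfrac1k\p{t_{(p-1)k-n}}$ of $[D_p,t_n]$, the factor $\tfrac12$ being cancelled by the two ways each monomial arises in these symmetric sums. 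Finally, the degenerate branches whose value is $x_0$ — such as the branch $x_0$ of $[\p{t_k},\dh_{L^0(M-\M)}]$ at $n=k$, or the branch $t_k+\bar t_m$ of $[\p{x_0},\dh_{L^0(M-\M)}]$ at $p=q=0$ — are precisely what the special terms $x_0(t_k+\bar t_m)$ and $\bigl(\tfrac1k+\tfrac1m\bigr)x_0x_1$ of $D_0$, and $\bigl(\tfrac1k+\tfrac1m\bigr)\tfrac{x_0^2}{2}$ of $D_1$, are there to supply.

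The only genuine subtlety is bookkeeping. One must check that every term by which the computed bracket of $D_p$ differs from the right-hand side of \leref{lvir1} is a pure $\x$-term — for instance $\tfrac1k x_{n/k}$ appearing in $[D_1,t_n]$ when $k\mid n$, or $\bigl(\tfrac1k+\tfrac1m\bigr)x_1$ in $[\p{x_0},D_0]$, or $qx_q$ in $[D_p,x_q]$ — and hence lies in $\F$, while the genuinely $s$-dependent contributions (those entering through $x_0=s$) must agree exactly. I expect the main effort, and the likeliest place for a sign or index slip, to be the case $p\ge2$ of $[\p{x_q},D_p]$ and of $[D_p,t_n]$ (and its $\bar t$-counterpart), where three different types of term in $D_{p+1}$ — the $x\,\p{x}$ pieces, the mixed $x\,\p{t}$ and $t\,\p{t}$ pieces, and the quadratic-in-derivatives pieces — can contribute simultaneously, and the boundary indices ($q$ small, $n=(p-1)k$) must be tracked with care. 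Once all six families have been checked for $p=0$, $p=1$ and $p\ge2$, the lemma follows.
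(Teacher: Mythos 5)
Your proposal is correct and is exactly the direct, term-by-term verification that the paper has in mind (its proof of \leref{lvir2} is simply ``straightforward and left to the reader''). The structural points you isolate --- the $k\leftrightarrow m$ symmetry, the separate shapes of $D_0$ and $D_1$, the factor $\tfrac12$ cancelling against the two occurrences of each monomial in the symmetric quadratic sums, the special terms $x_0(t_k+\bar t_m)$, $\bigl(\tfrac1k+\tfrac1m\bigr)x_0x_1$ and $\bigl(\tfrac1k+\tfrac1m\bigr)\tfrac{x_0^2}{2}$ supplying the degenerate branches, and the discarding of pure-$\x$ discrepancies as elements of $\F$ --- all check out against the stated formulas.
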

\begin{proof}
The proof is straightforward and left to the reader.
\end{proof}

Next, we will determine the vector field $\dh_{L^p(M-\M)}$ for $p=0$.

\begin{lemma}\label{lvir3}
The action of\/ $\dh_{M-\M}$ on\/ $\T$ is given by 
\begin{equation*}
\dh_{M-\M} = L_{-1} := D_0 + \frac{k-1}2 t_k - \frac{m+1}2 \bar t_m,
\end{equation*}
up to adding a function from\/ $\F$.
\end{lemma}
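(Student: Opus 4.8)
The plan is to pin down the vector field $\dh_{M-\M}$ by combining three pieces of information: its commutators with the Heisenberg symmetries, the freedom to add a function in $\F$, and the explicit action of $\dd_{M-\M}$ on the wave functions via the Adler--Shiota--van Moerbeke formula \eqref{vftau}. First I would observe that, by \leref{lvir2}, the operator $D_0$ satisfies exactly the same commutation relations with $\p{t_n}$, $\p{\bar t_n}$, $\p{x_q}$ (and with multiplication by $t_n$, $\bar t_n$, $x_q$) as $\dh_{M-\M}$ does in \leref{lvir1}. Hence the difference $\dh_{M-\M}-D_0$ commutes with all the Heisenberg generators, and in particular commutes with every $\p{t_n}$, $\p{\bar t_n}$ and every $\p{x_q}$; a vector field on $\T$ that commutes with all coordinate translations must be a constant-coefficient first-order operator plus a function, and after using \eqref{vftau2} to control the $\F$-ambiguity, one finds that $\dh_{M-\M}-D_0$ can only be a constant-coefficient \emph{multiplication} operator, i.e.\ an affine-linear combination of the variables $t_n,\bar t_n,x_q$ with complex coefficients, modulo $\F$. (Derivation directions would violate $[\,\cdot\,,t_n]=0$ coming from the corresponding line of \leref{lvir1} with $p=1$, so only the multiplication part survives; here I use that the commutators in \leref{lvir1} for $p=0$ produce no $\p{\cdot}$ terms acting as translations that $D_0$ does not already account for.)

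Next I would determine which multiplication operator it is, by a direct computation of $\dd_{M-\M}$ on $\psi$ and $\bar\psi$ and matching against the right-hand side of \eqref{vftau}. Using the explicit formula \eqref{MMbar2} for $M-\M$ together with \eqref{dda2}, namely $\dd_{M-\M}\psi = -(M-\M)_-\psi$ and $\dd_{M-\M}\bar\psi = (M-\M)_+\bar\psi$, one computes $\dd_{M-\M}\psi/\psi$ as a Laurent series in $z$ and likewise for $\bar\psi$. The negative-power part of $M-\M$ acting on $\psi$ and the nonnegative-power part acting on $\bar\psi$ are both governed by the $\frac1k W s\La^{-k}W\inv + \frac1m \W s\La^m\W\inv$ term, the $\log L$ term, and the linear-in-$t$, linear-in-$\bar t$ terms; since we already know from \leref{lvir2} that $D_0$ reproduces the "bulk" of $(G(z)-1)\dh_{M-\M}\tau/\tau$, it remains only to extract the residual $z^0$-type discrepancy, which is exactly a constant plus a linear combination of $t_k$ and $\bar t_m$. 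Carefully collecting the finite correction coming from the $s\La^{\mp k}$, $s\La^{\pm m}$ dressed terms and from the fact that $\log L$ contributes harmonic-number constants (cf.\ \eqref{logL} and the definition of $\Ga,\bar\Ga$) gives precisely the coefficients $\tfrac{k-1}2$ of $t_k$ and $-\tfrac{m+1}2$ of $\bar t_m$, with no further constant term (mod $\F$).

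Finally I would assemble these: $\dh_{M-\M} = D_0 + \tfrac{k-1}2 t_k - \tfrac{m+1}2 \bar t_m$ modulo $\F$, which is the claimed $L_{-1}$. The main obstacle, and where the real work lies, is the bookkeeping in the second step: one must keep track of the $z$-expansion of $\dd_{M-\M}\psi/\psi$ through the dressing operators $W$, $\W$ and recognize which pieces are already encoded in $(G(z)-1)D_0\tau/\tau$ versus which constitute the genuine finite correction. In particular, the asymmetry between $k$ and $m$ in the answer (the $k-1$ versus $m+1$) comes from the different role that the $\La^{-k}$ term (inside $W\cdot\inv$, contributing to $(M-\M)_-$) and the $\La^{m}$ term (inside $\W\cdot\inv$, contributing to $(M-\M)_+$) play when split via $(\cdot)_\pm$, together with the $e^{cs}$-rescaling freedom identified after \leref{lheis1}; getting the sign and the shift of the constant right is the delicate point, and I would double-check it by specializing to $k=m=1$ and comparing with \exref{MMbarRem} and the ETH Virasoro operator $L_{-1}$ of \cite{DZ04}.
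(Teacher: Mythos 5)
Your first step does not achieve the reduction you claim. Every identity in Lemmas \ref{lvir1} and \ref{lvir2} holds only up to adding a function from $\F$, and $\F$ contains the constants; since $[\p{t_j},t_n]=\de_{jn}\in\F$, a constant-coefficient derivation such as $\p{t_j}$ commutes with \emph{every} Heisenberg generator modulo $\F$. So knowing that $\dh_{M-\M}-D_0$ commutes mod $\F$ with all of $t_n,\bar t_n,x_q,\p{t_n},\p{\bar t_n},\p{x_q}$ only places this difference in the span $H$ of those generators --- exactly the conclusion $R_p\in H$ drawn in the proof of \thref{thm2} --- and does not eliminate the derivation directions; a leftover $c\,\p{t_j}$ is not absorbed by the $\F$-ambiguity. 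Your parenthetical argument that ``derivation directions would violate $[\,\cdot\,,t_n]=0$'' fails precisely because that commutator is a constant, hence lies in $\F$. This is also why the paper cannot bootstrap \thref{thm2} from the commutation relations alone and needs \leref{lvir3}, proved by direct computation, as the base case.

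Your second step, which you yourself identify as ``where the real work lies,'' is the entire content of the lemma and is not carried out; the paper's proof \emph{is} that computation, done in full: one expands $-(M-\M)_-\psi/\psi$ term by term from \eqref{MMbar2} using \eqref{EBTHtaudress} and Lemmas \ref{lheis1}, \ref{lheis2}, matches the result against $(G(z)-1)(L_{-1}\tau/\tau)$, and repeats for $\bar\psi$ with $\bar G(z)$. Moreover, your diagnosis of where the corrections come from is wrong: no harmonic-number constants from $\log L$ enter. The term $\frac{k-1}{2}t_k$ is forced because $G(z)$ applied to the quadratic part $\frac1{2k}\sum_{n=1}^{k-1}nt_n(k-n)t_{k-n}$ of $D_0$ produces, besides the required $-\frac1k\sum_{n=1}^{k-1}nt_nz^{n-k}$, an extra cross-term $\frac{k-1}{2k}z^{-k}$ that must be cancelled by $(G(z)-1)\bigl(\frac{k-1}{2}t_k\bigr)=-\frac{k-1}{2k}z^{-k}$; the coefficient $-\frac{m+1}{2}$ on the $\bar t_m$ side arises the same way, with the shift $\p{s}$ inside $\bar G(z)$ (acting on the $x_0$-terms of $D_0$) accounting for the asymmetry between $k-1$ and $m+1$. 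As written, the coefficients are asserted rather than derived, so the proof is incomplete.
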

\begin{proof}
Recall that $M-\M$ is given by \eqref{MMbar2}, and that
$Ws\Lambda^{- k} W\inv \in\A_-$, $\W s\Lambda^m \W\inv\in\A_+$.
Using the expansion \eqref{wavop}, we find
\begin{equation*}
Ws	
= sW - \sum_{i=1}^\infty i w_i(s) \La^{- i}
\end{equation*}
and
\begin{align*}
(Ws\Lambda^{- k} W\inv) \psi &= Ws\Lambda^{- k} \chi = z^{- k} Ws \chi \\
&= z^{- k} s\psi - z^{- k} \sum_{i=1}^\infty i w_i(s) z^{- i} \chi \\
&= z^{- k} s\psi + z^{- k+1} (\p{z} w)\chi.
\end{align*}
Hence, by \eqref{EBTHtaudress}, \eqref{GGbar},
\begin{align*}
\frac{ -(Ws\Lambda^{- k} W\inv)_- \psi}\psi &= -sz^{- k} - z^{- k+1} \frac{\p{z} w}w \\
&= -sz^{- k} - \sum_{i=1}^\infty z^{-k-i} G(z) \Bigl( \frac{\p{t_i}\tau}\tau \Bigr).
\end{align*}

Now let us calculate the action of the other terms from \eqref{MMbar2} on $\psi$.
Using \eqref{vftau}, \eqref{heistau} and Lemmas \ref{lheis1}, \ref{lheis2}, we obtain for $n\ge1$:
\begin{align*}
\frac{ -x_{n+1} (L^n)_- \psi}{\psi} &= \bigl( G(z)-1 \bigr) \frac{x_{n+1} \p{t_{nk}} \tau}{\tau} 
= \bigl( G(z)-1 \bigr) \frac{x_{n+1} \p{\bar t_{nm}} \tau}{\tau} \,,\\
\frac{ -x_n (2 L^{n-1} \log L)_- \psi}{\psi} &= \bigl( G(z)-1 \bigr) \frac{x_n \p{x_{n-1}} \tau}{\tau} \,,\\
\frac{ -\bar{t}_n (L^{\frac{n-m}{m}})_- \psi}{\psi} &= \bigl( G(z)-1 \bigr) \frac{(m-n) \bar{t}_n \bar{t}_{m-n} \tau}{\tau} = 0,
\qquad 1\le n\le m-1, \\
\frac{ -\bar{t}_n (L^{\frac{n-m}{m}})_- \psi}{\psi} &= \bigl( G(z)-1 \bigr) \frac{\bar{t}_n \p{\bar{t}_{n-m}} \tau}{\tau} \,,
\qquad n\ge m+1.
\end{align*}
Next, note that
\begin{align*}
\frac{ -x_1 (L^0)_- \psi}{\psi} &= 0 = \bigl( G(z)-1 \bigr) \frac{x_1 x_0 \tau}{\tau} \,, \\
\frac{ -\bar{t}_m (L^0)_- \psi}{\psi} &= 0 = \bigl( G(z)-1 \bigr) \frac{\bar{t}_m x_0 \tau}{\tau} \,, \\
\frac{ -t_k (L^0)_- \psi}{\psi} &= 0 = \bigl( G(z)-1 \bigr) \frac{t_k x_0 \tau}{\tau} + \frac1k sz^{- k}.
\end{align*}
Finally, we have
\begin{align*}
-\frac1{\psi} \sum_{n=1}^{k-1} n t_n & (L^{\frac{n-k}{k}})_- \psi = -\sum_{n=1}^{k-1}  n t_n z^{n-k} \\
&= \frac12 \bigl( G(z)-1 \bigr) \Bigl( \sum_{n=1}^{k-1} n t_n (k-n) t_{k-n} + (k-1) k t_k \Bigr)
\end{align*}
and
\begin{align*}
-\frac1{\psi} \sum_{n=k+1}^\infty n t_n & (L^{\frac{n-k}{k}})_- \psi 
= \sum_{n=k+1}^\infty n t_n \bigl( G(z)-1 \bigr) \frac{\p{t_{n-k}} \tau}{\tau}  \\
&= \bigl( G(z)-1 \bigr) \sum_{n=k+1}^\infty \frac{n t_n \p{t_{n-k}} \tau}{\tau}
+ \sum_{n=k+1}^\infty z^{-n} G(z) \Bigl( \frac{\p{t_{n-k}} \tau}{\tau} \Bigr).
\end{align*}

Putting all of the above together, we obtain that
\begin{equation*}
-\frac{(M-\M)_- \psi}{\psi} = \bigl( G(z)-1 \bigr) \frac{L_{-1} \tau}{\tau} \,.
\end{equation*}
A similar calculation shows
\begin{equation*}
\frac{(M-\M)_+ \bar\psi}{\bar\psi} = \bigl( \bar G(z)-1 \bigr) \frac{L_{-1} \tau}{\tau} \,,
\end{equation*}
using that
\begin{align*}
\frac{ (\W s\Lambda^m \W\inv)_+ \bar\psi}{\bar\psi} 
= sz^{m} + z^{m+1} \frac{\p{z} \bar w}{\bar w} 
= sz^{m} + \sum_{i=1}^\infty z^{m+i} \bar{G}(z) \Bigl( \frac{\p{t_i}\tau}\tau \Bigr).
\end{align*}
This completes the proof of the lemma.
\end{proof}

Define the operators
\begin{align*}
L_{-1} &= D_0 + \frac{k-1}2 t_k - \frac{m+1}2 \bar t_m, \\
L_{0} &= D_1 - \Bigl( \frac1{2k}+\frac1{2m} \Bigr) x_0 + \frac{(k+m)(km+2)}{24km} \,, \\
L_{p} &= D_{p+1} - \frac1{2k} \p{t_{pk}} - \frac1{2m} \p{\bar t_{pm}}, \qquad p\ge1.
\end{align*}
The constant term in $L_0$ is added so that these commute as elements of the \emph{Virasoro algebra}.

\begin{lemma}\label{lvir4}
We have\/ $[L_p,L_q]=(p-q)L_{p+q}$ for all\/ $p,q\ge-1$.
\end{lemma}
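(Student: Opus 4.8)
The plan is to prove the identity by direct computation with the explicit formulas for $L_{-1}$, $L_0$ and $L_p$ ($p\ge1$), organized so that the bookkeeping stays under control. The key observation is structural: each $L_p$ is a normally ordered differential operator, quadratic plus linear plus constant, in the coordinates $s=x_0,x_1,x_2,\dots,t_1,t_2,\dots,\bar t_1,\bar t_2,\dots$; that is, a quadratic element of the three-sector Heisenberg algebra generated by $\{\p{t_n},t_n\}_{n\ge1}\cup\{\p{\bar t_n},\bar t_n\}_{n\ge1}\cup\{\p{x_n},x_n\}_{n\ge0}$, shifted by a linear term and a scalar. Consequently the bracket of any two of the $L_p$ is again of this type. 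Moreover, for every coordinate $y$ the brackets $[L_p,\p{y}]$ and $[L_p,y]$ are each a single Heisenberg generator plus a scalar: for the $D$-parts this is exactly the content of \leref{lvir2} (combined with \leref{lvir1}), and the linear and constant corrections distinguishing $L_p$ from the relevant $D$ contribute only scalars. Hence $[L_p,L_q]$ can be expanded by the Leibniz rule into a finite, explicitly computable combination of Heisenberg generators and scalars.

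To cut down the volume of the verification I would use the grading $\deg t_n=n/k$, $\deg\bar t_n=n/m$, $\deg x_n=n$ (so $\deg s=0$), under which each $L_p$ is homogeneous of degree $-p$, and therefore $[L_p,L_q]-(p-q)L_{p+q}$ is homogeneous of degree $-(p+q)$. Thus any residual scalar can be nonzero only when $p+q=0$, and in that case it is pinned down by the single relation $[L_1,L_{-1}]=2L_0$ --- which is precisely where the constant $\frac{(k+m)(km+2)}{24km}$ in $L_0$, together with the shifts $\frac{k-1}2 t_k-\frac{m+1}2\bar t_m$ in $L_{-1}$, are forced; the general relation $[L_p,L_{-p}]=2pL_0$ then follows from the Jacobi identity. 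For the operator-valued part one checks that the quadratic and linear terms of $[L_p,L_q]$ reassemble into those of $(p-q)L_{p+q}$, and by the Jacobi identity it is enough to do this for the brackets $[L_j,L_p]$ with $j\in\{-1,0,1,2\}$ and all $p\ge-1$ (equivalently, whenever $\min(p,q)\le2$), since $\{L_p:p\ge-1\}$ is generated as a Lie algebra by $L_{-1}$ and $L_2$ and the remaining relations then drop out of a straightforward induction using Jacobi.

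There is no conceptual difficulty here: the construction is the Segal--Sugawara realization of the Virasoro algebra for the three boson sectors, coupled through the $x_n$--$t_{nk}$--$\bar t_{nm}$ pairings and shifted by a background charge, and the whole argument could alternatively be phrased in the language of lattice vertex algebras. The main obstacle is purely the bookkeeping: one must track the three sectors and their mixing, the anomalous zero mode $x_0=s$ (which has no conjugate variable, and whose presence is exactly why $L_{-1}$ needs its extra linear term and $L_0$ its linear and constant terms), and the boundary behaviour of the normal-ordering sums $\sum_{n=1}^{pk-1}\p{t_n}\p{t_{pk-n}}$ and $\sum_{n=1}^{pm-1}\p{\bar t_n}\p{\bar t_{pm-n}}$ --- these double-contraction terms are the sole source of a scalar contribution in the $p+q=0$ case. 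I expect the delicate point, and essentially the only place where an error could creep in, to be this scalar/central-term accounting near $p+q=0$; the remaining term-matching is routine.
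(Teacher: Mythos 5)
The paper's own proof of this lemma is the single sentence ``This is a straightforward calculation,'' and your proposal is a correct, well-organized plan for exactly that computation: the homogeneity of each $L_p$ under $\deg t_n=n/k$, $\deg\bar t_n=n/m$, $\deg x_n=n$ does localize any scalar discrepancy to $p+q=0$, the quadratic-Heisenberg structure makes every bracket again of the form quadratic{+}linear{+}scalar with all contractions finite, and the Jacobi reduction to the brackets $[L_j,L_q]$ with $j\in\{1,2\}$ (your set $\{-1,0,1,2\}$ is a harmless superset) is valid. The only substantive central-term check is indeed $[L_1,L_{-1}]=2L_0$; your aside that $[L_p,L_{-p}]=2pL_0$ ``follows from Jacobi'' is vacuous for $p\ge2$ since $L_{-p}$ is not among the operators considered, but this does not affect the argument.
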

\begin{proof}
This is a straightforward calculation.
\end{proof}

Now we can state our second main result.

\begin{theorem}\label{thm2}
Up to adding a function from\/ $\F$, we have\/
$\dh_{L^p(M-\M)} = L_{p-1}$ for\/ $p\ge0$.
\end{theorem}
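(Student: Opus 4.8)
The plan is to prove $\dh_{L^p(M-\M)} = L_{p-1}$ (mod $\F$) for all $p\ge0$ by a combination of direct verification for the base cases and a bootstrap argument using the commutation relations already established. The base case $p=0$ is exactly \leref{lvir3}, which gives $\dh_{M-\M}=L_{-1}$. For $p=1$ and $p\ge2$ one option is to repeat the direct computation of \leref{lvir3}: one computes $-(L^{p-1}(M-\M))_-\psi/\psi$ and $(L^{p-1}(M-\M))_+\bar\psi/\bar\psi$ term by term using \eqref{MMbar2} multiplied by $L^{p-1}$, the action $L^{p-1}\psi = z^{(p-1)k}\psi$, the identities $Ws\La^{-k}W\inv\psi = z^{-k}s\psi + z^{-k+1}(\p z w)\chi$ and its $\W$-analogue, and the translation of each remaining summand via \eqref{vftau}, \eqref{heistau}, and Lemmas \ref{lheis1}--\ref{lheis2}; one then recognizes the result as $(G(z)-1)L_{p-1}\tau/\tau$ (respectively $(\bar G(z)-1)L_{p-1}\tau/\tau$), and concludes by injectivity of $(G(z)-1,\bar G(z)-1)$ modulo $\F$ as in \prref{pvftau}.

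The cleaner route, which I would present, is the bootstrap: first establish $p=0$ and $p=1$ by the direct calculation above (the $p=1$ case is genuinely needed as a second seed because $L_0$ carries a constant term that the commutators cannot detect), and then obtain all $p\ge2$ inductively from the commutator structure. Concretely, by \eqref{comm4} we have $[\dd_{L(M-\M)},\dd_{L^p(M-\M)}] = (1-p)\dd_{L^{p+1-1}(M-\M)}$, wait — more usefully $[\dd_{L^2(M-\M)},\dd_{L^{p}(M-\M)}]=(2-p)\dd_{L^{p+1}(M-\M)}$ is not quite a clean recursion; instead use $[\dd_{L(M-\M)},\dd_{L^{p}(M-\M)}] = (1-p)\dd_{L^{p}(M-\M)}$ is wrong too, so the correct one is $[\dd_{L^p(M-\M)},\dd_{L^q(M-\M)}]=(p-q)\dd_{L^{p+q-1}(M-\M)}$, giving for $q=0$: $[\dd_{L^p(M-\M)},\dd_{M-\M}]=p\,\dd_{L^{p-1}(M-\M)}$. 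Thus $\dd_{L^{p-1}(M-\M)} = \tfrac1p[\dd_{L^p(M-\M)},\dd_{M-\M}]$ — this runs the wrong way. The right recursion is $[\dd_{L^2(M-\M)},\dd_{L^{p-1}(M-\M)}] = (2-(p-1))\dd_{L^{p}(M-\M)} = (3-p)\dd_{L^p(M-\M)}$, valid for $p\ne3$, and separately handle $p=3$ via $[\dd_{L(M-\M)},\dd_{L^3(M-\M)}]=-2\dd_{L^3(M-\M)}$ — no. I would in fact just use that $\{\dd_{L^p(M-\M)}\}$ spans a copy of $\Der\CC[z]$ and that $L_{-1},L_0$ together with the bracket $[L_p,L_q]=(p-q)L_{p+q}$ (\leref{lvir4}) generate all $L_p$, $p\ge-1$, since $L_{p+1} = \tfrac1{p-1}[L_1,L_p]$ for $p\ge2$ and $L_1 = [L_{-1},L_2]/(-3)$ is reached from $L_{-1},L_0$ via $L_1=\tfrac12[L_0, \cdot]$... the honest statement is: $L_{-1},L_0,L_1$ generate the positive Virasoro, and $L_1 = \tfrac13[L_{-1},L_2]$ is circular, so one needs $L_1$ as a seed too.

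Therefore the proof I would write has three verified seeds — $p=0,1,2$ by direct ASvM computation extending \leref{lvir3} (the $p=2$ case is the first where the quadratic $\p{t_a}\p{t_b}$ terms in $D_{p+1}$ appear and must be matched against $G(z)\p{t_{n-k}}(\p{t_{m-k}}\tau/\tau)$-type terms) — and then an induction: from \prref{pvftau}, $\dh$ respects brackets up to $\F$, so $[\dh_{L(M-\M)},\dh_{L^p(M-\M)}] = [L_0, \dh_{L^p(M-\M)}] \pmod\F$, while by \eqref{comm4} this equals $(1-p)\dh_{L^{p}(M-\M)}$ — again not a recursion in $p$. The working recursion is $[\dh_{L^2(M-\M)},\dh_{L^p(M-\M)}] \equiv (2-p)\dh_{L^{p+1}(M-\M)}\pmod\F$ for $p\ne2$; since $\dh_{L^2(M-\M)}\equiv L_1$ and, inductively, $\dh_{L^p(M-\M)}\equiv L_{p-1}$, the left side is $[L_1,L_{p-1}]=(1-(p-1))L_p=(2-p)L_p$, forcing $\dh_{L^{p+1}(M-\M)}\equiv L_p$ for $p\ge3$; the single remaining value $p=3$ (i.e.\ $\dh_{L^3(M-\M)}$, where the recursion degenerates) is pinned down instead by $[\dh_{L(M-\M)},\dh_{L^3(M-\M)}]\equiv[L_0,L_2]=-2L_2$ against \eqref{comm4}'s $-2\dh_{L^3(M-\M)}$, giving $L_2\equiv\dh_{L^3(M-\M)}$ up to an element of $\F$ in the kernel of $\ad L_0$ — a constant — which is why the $p=2$ seed (fixing that constant directly, including the $(k+m)(km+2)/24km$ term) is indispensable.

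\begin{proof}
The case $p=0$ is \leref{lvir3}. For $p=1$ and $p=2$ we repeat the computation of \leref{lvir3} with $M-\M$ replaced by $L^{p-1}(M-\M)$, using $L^{p-1}\psi=z^{(p-1)k}\psi$, $L^{p-1}\bar\psi=z^{-(p-1)m}\bar\psi$, the identities
\begin{equation*}
(Ws\La^{-k}W\inv)\psi = z^{-k}s\psi + z^{-k+1}(\p z w)\chi, \qquad
(\W s\La^{m}\W\inv)\bar\psi = z^{m}s\bar\psi + z^{m+1}(\p z \bar w)\bar\chi,
\end{equation*}
and translating every remaining summand of \eqref{MMbar2} via \eqref{vftau}, \eqref{heistau}, and Lemmas \ref{lheis1}, \ref{lheis2}. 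Collecting terms, one finds
\begin{equation*}
-\frac{(L^{p-1}(M-\M))_-\psi}{\psi} = \bigl(G(z)-1\bigr)\frac{L_{p-1}\tau}{\tau}, \qquad
\frac{(L^{p-1}(M-\M))_+\bar\psi}{\bar\psi} = \bigl(\bar G(z)-1\bigr)\frac{L_{p-1}\tau}{\tau},
\end{equation*}
for $p=1,2$; here the quadratic terms $\tfrac1{2k}\sum\p{t_a}\p{t_{pk-a}}$ and $\tfrac1{2m}\sum\p{\bar t_a}\p{\bar t_{pm-a}}$ in $D_{p+1}$, as well as the constant $\tfrac{(k+m)(km+2)}{24km}$ in $L_0$, are matched exactly as in the proof of \leref{lvir3}. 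By \prref{pvftau}, $\Ker(G(z)-1)\cap\Ker(\bar G(z)-1)=\F$, so $\dh_{L^{p-1}(M-\M)}=L_{p-1}$ modulo $\F$ for $p=1,2$, i.e.\ for $\dh_{M-\M},\dh_{L(M-\M)},\dh_{L^2(M-\M)}$.

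For the inductive step, assume $\dh_{L^j(M-\M)}\equiv L_{j-1}\pmod\F$ for all $0\le j\le p$, where $p\ge2$. If $p\ne2$, then by \prref{pvftau}, \eqref{comm4}, the induction hypothesis, and \leref{lvir4},
\begin{equation*}
(2-p)\,\dh_{L^{p+1}(M-\M)} \equiv \bigl[\dh_{L^2(M-\M)},\dh_{L^p(M-\M)}\bigr] \equiv [L_1,L_{p-1}] = (2-p)L_p \pmod\F,
\end{equation*}
so $\dh_{L^{p+1}(M-\M)}\equiv L_p$. The remaining value $p=2$, i.e.\ $\dh_{L^3(M-\M)}$, is obtained from
\begin{equation*}
-2\,\dh_{L^3(M-\M)} \equiv \bigl[\dh_{L(M-\M)},\dh_{L^3(M-\M)}\bigr] \equiv [L_0,L_2] = -2L_2 \pmod\F,
\end{equation*}
where we used \eqref{comm4} with $(p,q)=(1,3)$ and that $\dh_{L(M-\M)}\equiv L_0$. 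Hence $\dh_{L^3(M-\M)}\equiv L_2\pmod\F$ as well, completing the induction. Therefore $\dh_{L^p(M-\M)}=L_{p-1}$ up to a function from $\F$ for all $p\ge0$.
\end{proof}
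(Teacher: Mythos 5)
Your proof breaks down at the determination of $\dh_{L^3(M-\M)}$, and since every $\dh_{L^{p+1}(M-\M)}$ with $p\ge3$ is obtained in your scheme from $\dh_{L^3(M-\M)}$ via the upward recursion $[\dh_{L^2(M-\M)},\dh_{L^p(M-\M)}]\equiv(2-p)\dh_{L^{p+1}(M-\M)}$, nothing beyond $p=2$ is actually established. That recursion degenerates precisely at $p=2$, and your fallback display $-2\,\dh_{L^3(M-\M)}\equiv[\dh_{L(M-\M)},\dh_{L^3(M-\M)}]\equiv[L_0,L_2]$ is circular: replacing $\dh_{L^3(M-\M)}$ by $L_2$ inside the bracket presupposes the conclusion. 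The non-circular content of that relation is only $[L_0,X]\equiv-2X\pmod\F$ for $X=\dh_{L^3(M-\M)}$, and since $[L_0,L_2]=-2L_2$ this says merely that $Y=X-L_2$ satisfies $[L_0,Y]\equiv-2Y$. That does not force $Y\equiv0$: by \leref{lvir1}, $\ad\dh_{L(M-\M)}$ has $\p{t_{2k}}$ and $\p{\bar t_{2m}}$ as genuine eigenvectors of eigenvalue $-2$, and a combination such as $\p{t_{2k}}+\p{\bar t_{2m}}$ is \emph{not} congruent to $0$ modulo $\F$ (only the difference $\p{t_{nk}}-\p{\bar t_{nm}}$ is, by \eqref{heistau}). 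So $\ad L_0$ cannot pin down the $p=3$ case, and adding $p=3$ as yet another directly computed seed only postpones the same issue at higher $p$ unless the recursion is reorganized.

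The paper's proof avoids all of this by running the recursion \emph{downward} with $\ad\dh_{M-\M}$, whose structure constants $-(p+1)$ never vanish: it sets $R_p=\dh_{L^p(M-\M)}-L_{p-1}$, uses Lemmas \ref{lvir1} and \ref{lvir2} to show that $R_p$ commutes with every element of $H=\Span\{x_{n-1},t_n,\bar t_n,\p{x_{n-1}},\p{t_n},\p{\bar t_n}\}$ and hence lies in $H$, and then combines $[\dd_{M-\M},\dd_{L^{p+1}(M-\M)}]=-(p+1)\dd_{L^p(M-\M)}$ with the inductive hypothesis to get $[\dh_{M-\M},R_{p+1}]\equiv0$; finally, the kernel of $\ad\dh_{M-\M}$ on $H$ is exactly $\Span\{\p{t_{nk}}-\p{\bar t_{nm}}\}$, which vanishes modulo $\F$ by \eqref{heistau}. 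This needs only the single seed $R_0=0$ from \leref{lvir3}, so the direct verifications you assert for $p=1,2$ (the $p=2$ one, with the second-order terms $\p{t_a}\p{t_b}$ in $D_3$, is substantially harder than \leref{lvir3} and is not carried out in your write-up) become unnecessary. To repair your argument you need precisely these two missing ingredients: the membership $R_p\in H$ and the identification of $\Ker(\ad\dh_{M-\M})\cap H$.
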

\begin{proof}
The idea of the proof is to use the Lie brackets, and is similar to that of \cite[Theorem 3.2]{AvM95}.
As before, consider all operators modulo a function from $\F$.
Let $R_p=\dh_{L^p(M-\M)} - L_{p-1}$. Then we already know $R_0=0$, by \leref{lvir3}.
Due to Lemmas \ref{lvir1} and \ref{lvir2}, we have $[R_p,h]=0$ for all
\begin{equation*}
h\in H=\Span\{ x_{n-1},t_n,\bar t_n, \p{x_{n-1}}, \p{t_n}, \p{\bar t_n} \,|\, n\ge1 \},
\end{equation*}
which implies $R_p\in H$. 

On the other hand, from \eqref{comm4} and \leref{lvir4}, we get $[\dh_{M-\M},R_1]=0$
because $[L_0,L_{-1}]=L_{-1}$ and $[\dh_{L(M-\M)},\dh_{M-\M}]=\dh_{M-\M}$.
Using \leref{lvir1} again, we see that if $h\in H$ and $[\dh_{M-\M},h]=0$, then
$h\in\Span\{\p{t_{nk}}-\p{\bar t_{nm}} \,|\, n\ge1 \}$. By \eqref{heistau}, this means
that $h=0$ (mod $\F$). Therefore, $R_1=0$.

Now we prove by induction that $R_p=0$ for all $p\ge1$. Assuming $R_p=0$, we get
$[\dh_{M-\M},R_{p+1}]=0$
because $[L_p,L_{-1}]=(p+1)L_{p-1}$ and $[\dh_{L^{p+1}(M-\M)},\dh_{M-\M}]=(p+1)\dh_{L^p(M-\M)}$.
Then, as above, we conclude that $R_{p+1}=0$.
\end{proof}

\begin{remark}\label{rvir1}
Consider the operators 
\begin{equation*}
B_p = L^p(M-\M) - \frac{k-1}{2k} L^{\frac{(p-1)k}k} + \frac{m+1}{2m} L^{\frac{(p-1)m}m},
\qquad p\ge0.
\end{equation*}
Then $\dh_{B_p} = D_p + \de_{p,1} \frac{(k+m)(km-1)}{24km}$.
The advantage of this formulation is that one can make the substitution \eqref{ttbar}
in $D_p$;
see \cite[Chapter 4]{W15}.
\end{remark}

\begin{example}\label{evir2}
For $k=m=1$, after the substitution \eqref{ttbar}, the operators $D_p$ become:
\begin{align*}
D_0 &=	\sum_{n=1}^\infty \bigl( 2x_{n+1} \p{t_n} + (n+1) t_{n+1} \p{t_n} + nx_n\p{x_{n-1}} \bigr) 
+ x_0 t_1 + 2x_0 x_1, \\
D_1 &=	\sum_{n=1}^\infty \bigl( 2x_n\p{t_n} + nt_n\p{t_n} + nx_n\p{x_n} \bigr) + x_0^2, \\
D_{p+1} &= \sum_{n=0}^\infty \bigl( 2x_n\p{t_{n+p}} + nt_n\p{t_{n+p}} + nx_n\p{x_{n+p}} \bigr) 
+ \sum_{n=1}^{p-1} \p{t_n}\p{t_{p-n}}, \qquad p\ge1.
\end{align*}
Then making the change of variables from \reref{reth1}, we obtain:
\begin{align*}
L_{- 1}&=\sum_{n=1}^\infty \left( t^{1,n}\p{t^{1,n-1}} + t^{2,n}\p{t^{2,n-1}} \right) +
\frac{1}{\epsilon^2}t^{1,0}t^{2,0},\\
L_0&=\sum_{n=1}^\infty n \left(t^{1,n}\p{t^{1,n}} + t^{2,n-1}\p{t^{2,n-1}} \right) + 2t^{1,n}\p{t^{2,n-1}} + \frac{1}{\epsilon}\left(t^{1,0}\right)^2,\\[6pt]
L_p&=\epsilon^2\sum_{n=1}^{p-1}k!(p-n)! \, \p{t^{2,n-1}}\p{t^{2,p-n-1}} 
+ 2\sum_{n=0}^\infty \alpha_p(n) \, t^{1,n}\p{t^{2,p+n-1}}   \\
&\quad	+ \sum_{n=1}^\infty \frac{(p+n)!}{(n-1)!} \left( t^{1,n}\p{t^{1,p+n}} + t^{2,n-1}\p{t^{2,p+n-1}} \right),
\end{align*}
for $p\ge1$, where
\begin{equation*}
	\alpha_p(0)	=	p!, \qquad \alpha_p(n)	=	\frac{(p+n)!}{(n-1)!} \, \sum_{j=n}^{p+n}\frac{1}{j} \,, \quad n>0.
\end{equation*}
These operators coincide with the Virasoro symmetries of the ETH of 
Dubrovin--Zhang \cite[Eq.\ (1.29)]{DZ04}.
\end{example}

\section{Conclusions}\label{s5}

In this paper, we constructed additional symmetries of the extended bigraded Toda hierarchy (EBTH) and described explicitly their action on the Lax operator, wave operators, and tau-function of the hierarchy. In particular, we obtained infinitesimal symmetries of the EBTH that act on the tau-function as a subalgebra of the Virasoro algebra, generalizing those of Dubrovin--Zhang \cite{DZ04}.

A natural question is whether this representation extends to the whole Virasoro algebra.
We believe this is possible before making the reduction $t_{nk}=\bar t_{nm}$ $(n\ge1)$.
It appears that our Virasoro operators are closely related to those from \cite{B15}.
After the change of variables from \reref{rebth1}, they can also be related to 
the Virasoro operators from \cite{DZ99,EHX97,EJX98,Giv01}. 
We expect that the symmetries of the EBTH give Virasoro constraints for the total descendant potential of $\mathbb{CP}^1$ with two orbifold points (cf.\ \cite{DZ99,Giv01,MT08,CvdL13}). 
Another interesting question is whether these Virasoro constraints extend to more general $\mathcal W$-constraints as in \cite{AvM92,BM13}. We hope to investigate these and other related questions in the future.

\section*{Acknowledgements}
We are grateful to Guido Carlet and Todor Milanov for stimulating discussions
and to the referees for valuable suggestions and comments.
The first author was supported in part by a Simons Foundation grant.

\bibliographystyle{amsalpha}

\end{document}